\documentclass[aps, superscriptaddress]{revtex4}
\newcommand{\bra}[1]{\langle #1|}
\newcommand{\ket}[1]{|#1\rangle}

\usepackage{amsmath,amsthm,amsfonts,amssymb} 
\usepackage{fontenc}
\usepackage[all]{xy}
\usepackage{graphicx,subfigure,multirow}
\usepackage{tikz}
\usetikzlibrary{shapes, arrows, shadows}
\usepackage[latin1]{inputenc}
\usepackage[scaled]{helvet}
\usepackage[toc, page]{appendix}
\usepackage{graphicx}				
\usepackage{amssymb}
\usepackage{changes}
\usepackage{lipsum}
\newtheorem{definition}{Definition}
\newtheorem{theorem}{Theorem}
\newtheorem{proposition}{Proposition}
\newtheorem{lemma}{Lemma}
\newtheorem{Corollary}{Corollary}
\newtheorem{Example}{Example}

\begin{document}

\title{Self-testing through EPR-steering}   
\author{Ivan \v{S}upi\'{c}}
\email{ivan.supic@icfo.es}
\affiliation{ICFO-Institut de Ciencies Fotoniques, The Barcelona Institute of Science and Technology, 08860 Castelldefels (Barcelona), Spain.}
\author{Matty~J. Hoban}
\email{matthew.hoban@cs.ox.ac.uk}
\affiliation{University of Oxford, Department of Computer Science, Wolfson Building, Parks Road, Oxford OX1 3QD, UK.}
\affiliation{School of Informatics, University of Edinburgh, 10 Crichton Street, Edinburgh EH8 9AB, UK.}					

\begin{abstract}
The verification of quantum devices is an important aspect of quantum information, especially with the emergence of more advanced experimental implementations of quantum computation and secure communication. Within this, the theory of device-independent robust self-testing via Bell tests has reached a level of maturity now that many quantum states and measurements can be verified without direct access to the quantum systems: interaction with the devices is solely classical. However, the requirements for this robust level of verification are daunting and require high levels of experimental accuracy. In this paper we discuss the possibility of self-testing where we only have direct access to one part of the quantum device. This motivates the study of self-testing via EPR-steering, an intermediate form of entanglement verification between full state tomography and Bell tests. Quantum non-locality implies EPR-steering so results in the former can apply in the latter, but we ask what advantages may be gleaned from the latter over the former given that one can do partial state tomography? We show that in the case of self-testing a maximally entangled two-qubit state, or ebit, EPR-steering allows for simpler analysis and better error tolerance than in the case of full device-independence. On the other hand, this improvement is only a constant improvement and (up to constants) is the best one can hope for. Finally, we indicate that the main advantage in self-testing based on EPR-steering could be in the case of self-testing multi-partite quantum states and measurements. For example, it may be easier to establish a tensor product structure for a particular party's Hilbert space even if we do not have access to their part of the global quantum system.
\end{abstract}

\maketitle

\section{Introduction}

The certification of quantum devices is an important strand in current research in quantum information. Research in this direction is not only of relevance to quantum information but also the foundations of quantum theory: what are the truly quantum phenomena? For example, if presented with devices as black boxes that are claimed to contain systems associated with particular quantum states and measurements, we can certify these claims by demonstrating quantum non-locality, i.e. by violating a particular Bell inequality \cite{Bell}. 

The obvious aspect of quantum non-locality that is useful for quantum information is that it can certify quantum entanglement. While this is relevant for the certification of the presence of quantum entanglement, if we wish to certify a particular state and measurement we need more information. More specifically, given a particular violation of a Bell inequality, can we infer the state and measurements? The amount of information necessary to certify a particular state once entanglement is certified has been discussed in Ref. \cite{Carmeli}. Let us consider the specific example of the Clauser-Horne-Shimony-Holt (CHSH) inequality \cite{CHSH}. It can be shown that (up to local operations that will be specified later) the only state that can maximally violate the CHSH inequality is the maximally entangled two-qubit state \cite{CHSHUniqueness}. Furthermore, if we are close to the maximal violation, then we are also close to this maximally entangled state (for appropriate notions of closeness) \cite{MKYS}. Results in this direction are referred to as \textit{ robust self-testing} (RST) such that a near-maximal violation of a Bell inequality robustly self-tests a state. We can also robustly self-test measurements performed on a state therefore equipping us with certification techniques for both states and measurements.

%describe robustness
To be more concrete, RST is possible if the correlations we observe in a Bell test are $\epsilon$-close to some ideal correlations -- such as those maximally violating a Bell inequality -- then we can infer that the state used in the Bell test is $O(\sqrt{\epsilon})$-close to our ideal state. The notion of closeness will be expounded upon later but for correlations we often talk about the difference between the maximal Bell inequality violation and the violation obtained in the experiment, and for quantum states, we refer to the trace distance. This quadratic difference in the distance measures cannot be improved upon if we only have access to the correlations \cite{BQC}.

In this direction, a bounty of results have emerged. There are now analytical methods for robustly self-testing Greenberger-Horne-Zeilinger (GHZ) states \cite{PVN}, graph states \cite{McKague}, partially entangled two-qubit states \cite{Cedric} and the so-called W state \cite{Wu}. In addition to this, numerical robust self-testing methods were developed that allow for using arbitrary Bell inequalities \cite{Bancal}. Also, it is worth noting that by simply and directly considering the correlations produced in the experiment, numerical methods developed in Refs. \cite{Bancal,Olmo,YVBSN} can also be tailored to these considerations.

It is now well-established that the violation of a Bell inequality is not the only method for detecting entanglement in general. It is the appropriate method if one only has access to measurement statistics, i.e. the devices are treated like black boxes. Clearly, if we have direct access to the quantum state (e.g. the devices are trusted), we can do full state tomography to see if it is an entangled state. There does exist a third option, if a provider claims to produce a bipartite entangled state and sends one half of the state to a client who wants to use the state. We can assume that the client trusts all of the apparatus in their laboratory and can thus do state tomography on their share of the system. This set-up corresponds to the notion of \textit{EPR-steering} in the study of entanglement \cite{Erwin,WJD}, where EPR represents Einstein-Podolsky-Rosen in tribute to their 1935 original paper \cite{EPR}. A natural question is whether one can perform robust self-testing in such a scenario? This is obviously true since we can use the violation of a Bell inequality between the client and provider. A better question is whether it is vastly more advantageous to consider self-testing in this scenario? In this work, we address this question.

%motivation
Before describing the work in this paper, we would like to motivate this scenario from the point-of-view of quantum information. In particular, studying these EPR-steering scenarios may be useful when considering \textit{Blind Quantum Computing} where a client has restricted quantum operations and wishes to securely delegate a computation to a ``server" that has a full-power quantum computer \cite{BFK,BQC}. By securely, we mean that the server does not learn the input to the computation nor the particular computation itself. In this framework, the client trusts all of his quantum resources but distrusts the server. EPR-steering has also been utilised for \textit{one-sided device-independent quantum key distribution} where the ``one-sided" indicates that one of the parties does not trust their device but the other does \cite{SteerQKD,Walk}. There have even been experimental demonstrations of cryptographic schemes in this direction \cite{Gehring}. Also in this one-sided device-independent approach, the detection loophole is less detrimental to performing cryptographic tasks as compared with full device-independence so it is more amenable to current optical experiments \cite{Wittmann, Armstrong}.

%discuss assemblages with respect to steering
Since one party (the client) now trusts all systems in their laboratory, they can perform quantum state tomography; after all, they know the Hilbert space dimension of their quantum systems and can choose to make measurements that characterise states of that particular dimension. This novel aspect of EPR-steering as compared to standard non-locality introduces a novel object of study, called the assemblage: the reduced states on a client's share of some larger states conditioned on measurements made on the provider's side \cite{assemblage}. An element of an assemblage is then a sub-normalized quantum state and we can now also phrase robust self-testing in terms of these objects, which we call \textit{robust assemblage-based one-sided self-testing} (AST) with ``one-sided" to indicate there is one untrusted party. In essence, we show that AST can be achieved and the experimental state is at least $O(\sqrt{\epsilon})$-close to an ideal state if the observed elements of an assemblage are $\epsilon$-close to the ideal elements (where distance in both cases is the trace distance). This is in addition to considering the correlations between the client and provider obtained from performing a measurement on the elements of an assemblage, which we call \textit{robust correlation-based one-sided self-testing} (CST) -- the notions of robustness are the same as for RST.

Conventional RST based on Bell inequality violation implies CST so in the latter scenario we will never do any worse than in the former. Furthermore, CST implies AST so the latter truly captures the novel capabilities in the formalism. In this work, for particular situations we show both analytically and numerically that one can do better in the framework of CST and AST as compared to current methods in RST. This is to be expected since by trusting one side, we should have access to more information about our initial state. On the other hand, we show that the degree of the improvement is not as dramatic as we would like. In particular, if the assemblage is, in some sense, $\epsilon$-close to the ideal assemblage, we can only establish $O(\sqrt{\epsilon})$-closeness of our operations to the ideal case. This quadratic difference is also shown to be a general limitation and not just a limitation of our specific methods. In this way, from the point-of-view of self-testing, EPR-steering behaves much like quantum non-locality.

We indicate where AST and CST could also prove advantageous over RST and this is in the case of establishing the structure of sub-systems within multi-partite quantum states. That is, in certain RST proofs a lot of work and resources goes into establishing that untrusted devices have quantum systems that are essentially independent from one another. In addition to considering the self-testing of a bipartite quantum state, we show that one can get further improvements by establishing a tensor product structure between sub-systems. This could be where the essential novelties of AST and CST lie.

%related work
Aside from work in the remit of self-testing there is other work in the direction of entanglement verification between many parties. For example, Pappa \textit{et al} show how to verify GHZ states among $n$ parties if some of them can be trusted while others not \cite{Pappa}. Their verification proofs boil down to establishing the probability with which the quantum state passes a particular test given the state's distance from the ideal case. This can be seen as going in the other direction compared to CST, where we ask how close a state is to ideal if we pass a test (demonstrating some ideal correlations) with a particular probability. Our work thus nicely complements some of the existing methods in this direction.

%limited quantum dimension
Another line of research that is related to our own is to characterise (non-local) quantum correlations given assumptions made about the dimension of the Hilbert space for one of the parties \cite{Gonzalo}. This assumption of limiting the dimension is a relaxation of the assumption that devices in one of the parties' laboratories are trusted. These works are relevant for \textit{semi-device-independent quantum cryptography} and \textit{device-independent dimension witnesses} \cite{BrunnerPawlowski,Gallego}

In Sec. \ref{sec1} we outline the general framework, introduce CST and AST and introduce the methods which will be relevant. Given our framework, in Sec. \ref{sec2} we demonstrate how to self-test the maximally entangled two-qubit state and give analytical and numerical results demonstrating an improvement over conventional RST. In Sec. \ref{sec3} we briefly discuss the self-testing of multi-partite states and give numerical results showing how the GHZ state can be self-tested. We also discuss how one could exploit tensor product structure on the trusted side to aid self-testing. We conclude with some general discussion in Sec. \ref{sec4}.

\section{General Set-up}\label{sec1}

In this section we introduce the framework in which our results will be cast. For brevity we will restrict ourselves to the case of two parties each with access to some devices. In Sec. \ref{sec3} we will extend the framework to more-than-two parties. In our setting (see Fig. \ref{fig:fig1}), one of the parties is the client and the other is the provider and the two of them share both quantum and classical communication channels and all devices are assumed to be quantum mechanical. Therefore we can associate the parties with the finite-dimensional Hilbert spaces $\mathcal{H}_{C}$ and $\mathcal{H}_{P}$ for the client and provider respectively \footnote{We assume finite dimensional Hilbert spaces for our purposes since we want to self-test systems of finite dimension. We can follow Reichardt, Unger and Vazirani and allow for finite dimensional systems approximating those of infinite dimension since robustness allows for this \cite{BQC}.}. The quantum communication channel is used to send a quantum system from the provider to the client and the client will then perform tomography on this part of the state. After the provider has communicated a quantum system, there will be some joint quantum system and the client can now ask the provider (using the classical communication channel) to perform measurements on their share of the system; the outcome is then communicated to the client. 

\begin{figure}
  \centering
    \includegraphics[width=0.33\textwidth]{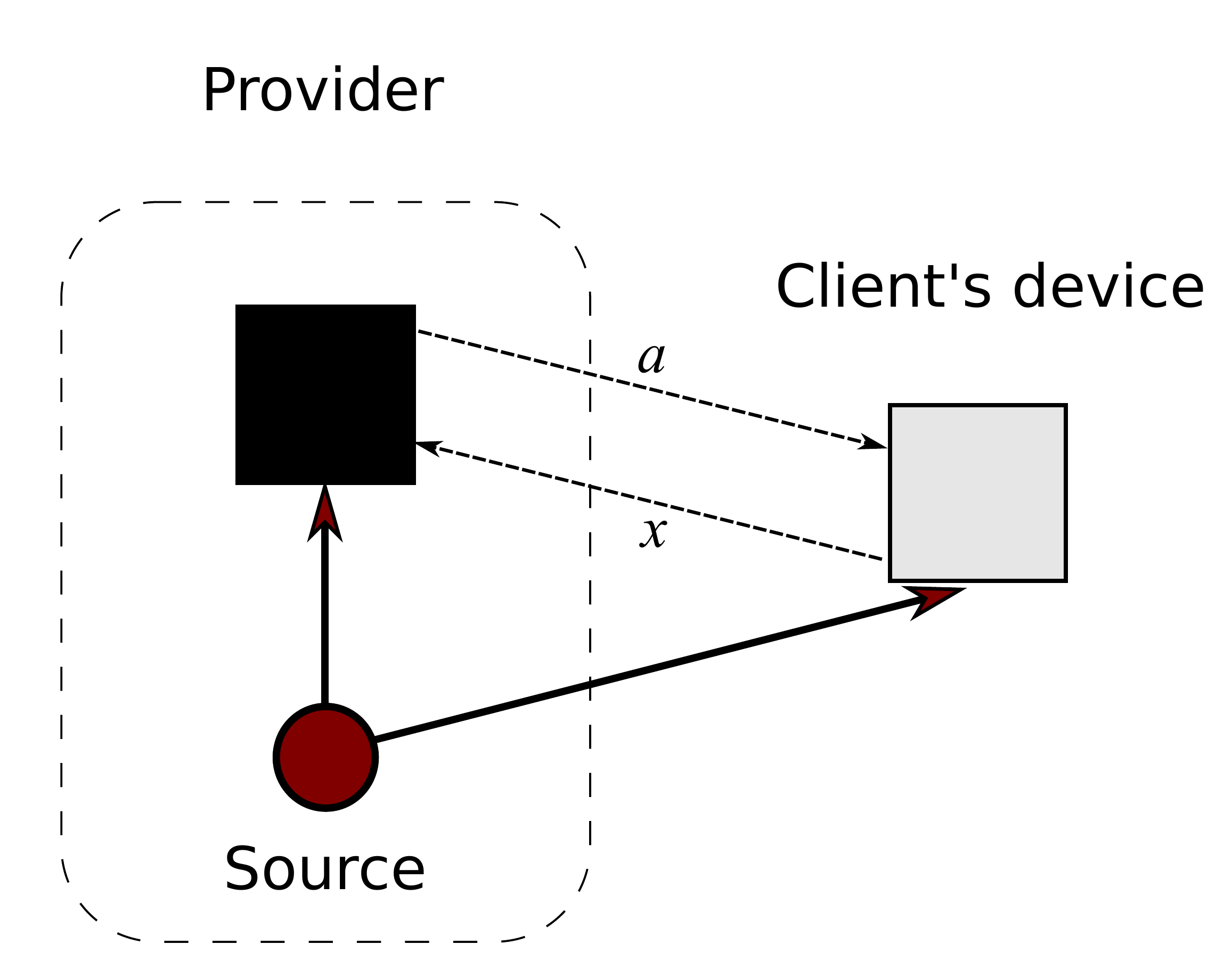}
    \caption{In our framework we have a client who has direct access to his part of the quantum system generated by the source in the provider's laboratory. We can also ask the provider to perform a measurement labelled by $x$ and generate an outcome labelled by $a$ all the while treating the provider's measurement device and the source as a black box. The dotted lines denote classical channels, while full lines represents a quantum channel.\label{fig:fig1}}
\end{figure}

In this work we assume that the provider gives the client arbitrarily many copies of the subsystem such that they can do perfect tomography on their quantum system. We will not consider complications introduced by only having access to finitely many systems. This is a standard assumption in many works on self-testing and we will comment on relaxing this assumption in Sec. \ref{sec4}.

After the provider sends a quantum system to the client they share a quantum state $\rho_{CP}$, a density matrix acting on the Hilbert space $\mathcal{H}_{C}\otimes\mathcal{H}_{P}$. Crucially, in our work, the dimension of the Hilbert space $\mathcal{H}_{C}$ is known but the space $\mathcal{H}_{P}$ can have an unrestricted dimension since we do not, in general, trust the provider. Therefore, without loss of generality, the density matrix $\rho_{CP}=\ket{\psi}\bra{\psi}$ is associated with a pure state $\ket{\psi}\in\mathcal{H}_{C}\otimes\mathcal{H}_{P}$ since we can always dilate the space $\mathcal{H}_{P}$ to find an appropriate purification.

After establishing the shared state $\ket{\psi}$, the client asks the provider to perform a measurement from a choice of possible measurements. These measurements are labelled by a symbol $x\in\{0,1,2,...,(d-1)\}$ if there are $d\in\mathbb{N}$ possible choices of measurement. For each measurement, there are $k\in\mathbb{N}$ possible outcomes labelled by the symbol $a\in\{0,1,2,...,(k-1)\}$. The client then communicates a value of $x$ to the provider and then receives a value of $a$ from the provider. Again, since the dimension of $\mathcal{H}_{P}$ is unrestricted, we assume that the measurement made by the provider has outcomes that are associated with projectors $E_{a|x}$ such that $\sum_{a}E_{a|x}=\mathbb{I}$ and $E_{a|x}E_{a'|x}=\delta_{a,a'}E_{a|x}$. 

Conditioned on each measurement outcome $a$ given the choice $x$, the client performs state tomography on their part of the state $\ket{\psi}$ which can be described in terms of the operators $\sigma_{a|x}=\textrm{tr}_{P}\left(\mathbb{I}_{C}\otimes E_{a|x}\ket{\psi}\bra{\psi}\right)$ where $\mathbb{I}_{C}$ is the identity operator acting on $\mathcal{H}_{C}$ and $\textrm{tr}_{P}\left(\cdot\right)$ is the partial trace over the provider's system. An \textit{assemblage} is then the set $\{\sigma_{a|x}\}_{a,x}$ with elements satisfying $\sum_{a}\sigma_{a|x}=\textrm{tr}_{P}(\ket{\psi}\bra{\psi})=\rho_{C}$, the reduced state of the client's system. One can extract the probability $p(a|x)$ of the provider's measurement outcome $a$ for the choice $x$ by taking $\textrm{tr}(\sigma_{a|x})=p(a|x)$.
 
Instead of studying the assemblage directly, we may simplify matters by considering the \textit{correlations} between the client and provider where both parties make measurements and look at the conditional probabilities $p(a,b|x,y)$ where $y\in\{0,1,...,(d-1)\}$ is the client's choice of measurement and $b\in\{0,1,2,...,(k-1)\}$ the outcome for that choice. If the measurement made by the client is described in terms of the generalised measurement elements $F_{b|y}$ such that $\sum_{b}F_{b|y}=\mathbb{I}_{C}$ then these correlations can be readily obtained from elements of the assemblage as $p(a,b|x,y)=\textrm{tr}\left(F_{b|y}\sigma_{a|x}\right)$.

In self-testing, the provider claims that they are manufacturing a particular state $\ket{\tilde{\psi}}\in\mathcal{H}_{C}\otimes \mathcal{H}'_{P}$ and performing particular (projective) measurements $\{\tilde{E}_{a|x}\}_{a,x}$ on $\mathcal{H}'_{P}$. We call this combination of state and measurements the \textit{reference experiment} to distinguish it from the physical experiment where $\ket{\psi}$ and $\{E_{a|x}\}_{a,x}$ are the state and measurements respectively. Since we do not have direct access to the Hilbert space of the provider it is possible that they are manufacturing something different that has no observable effect on experimental outcomes. For example, they could prepare the state $\ket{\psi}=\ket{\tilde{\psi}}\ket{0}$ and retain the system in state $\ket{0}$ but never perform any operation on it. This will not affect the assemblage so we must allow for operations on the provider's system in $\mathcal{H}_{P}$ that leave assemblages unaffected. Following the discussion by McKague and Mosca, some of these changes include \cite{MkM}:
\begin{enumerate}
\item Unitary change of basis in $\mathcal{H}_{P}$
\item Adding ancillae $\ket{\mathcal{A}}$ to physical systems (in tensor product) upon which measurements do not act, i.e. $\ket{\psi}\rightarrow\ket{\psi}\ket{\mathcal{A}}$
\item Altering the measurements $\{E_{a|x}\}_{a,x}$ outside the support of the state $\ket{\psi}$
\item Embedding the state $\ket{\psi}\in\mathcal{H}_{C}\otimes\mathcal{H}_{P}$ and measurements $\{E_{a|x}\}_{a,x}$ into a Hilbert space $\mathcal{H}_{C}\otimes\mathcal{K}_{P}$ where $\mathcal{K}_{P}$ has a different dimension to $\mathcal{H}_{P}$. 
\end{enumerate}
Allowing for these possible transformations we need an appropriate notion of equivalence between the physical experiment and the reference experiment. We say that the physical experiment associated with the state $\ket{\psi}$ and measurements $\{E_{a|x}\}_{a,x}$ are equivalent to the reference experiment associated with the state $\ket{\tilde{\psi}}$ and measurements $\{\tilde{E}_{a|x}\}_{a,x}$ if there exists an isometry $\Phi:\mathcal{H}_{P}\rightarrow\mathcal{H}_{P}\otimes\mathcal{H}'_{P}$ such that 
\begin{align}\label{equiv}
\Phi(\ket{\psi})&=\ket{\mathcal{A}}\ket{\tilde{\psi}},\nonumber\\
\Phi(\mathbb{I}_{C}\otimes E_{a|x}\ket{\psi})&=\ket{\mathcal{A}}\left(\mathbb{I}_{C}\otimes \tilde{E}_{a|x}\right)\ket{\tilde{\psi}},
\end{align}
for all $a$, $x$ and $\ket{\mathcal{A}}\in\mathcal{H}_{P}$. 

A consequence of this notion of equivalence is that if a physical experiment is equivalent to the reference experiment then the former can be constructed from the latter by the operations described above. In the other direction, if the provider does indeed construct the reference experiment and then performs one of the transformations listed above then an isometry can always be constructed to establish equivalence between the physical and reference experiments. An important issue in self-testing based on probabilities is that experimental probabilities are invariant upon taking the complex conjugate of both the state and measurements. Thus, the best one can hope for in this kind of self-testing is to certify the presence of a probabilistic mixture of the reference experiment and its complex conjugate. Due to this deficiency and the fact that complex conjugation is not a physical operation, only purely real reference experiments can be properly self-tested. In the introduction we gave an overview of the known results in self-testing and indeed all the states and measurements which allow for self-testing have a purely real representation (\cite{MKYS}-\cite{Bancal},\cite{Ivan}). In Ref. \cite{MkM} the authors deal more rigorously with the problem and even show that for some cryptographic purposes self-testing of the reference experiment involving complex measurements does not undermine security. We note in Appendix \ref{app1} that for our work we may not need to restrict to purely real reference experiments: an assemblage is not typically invariant under taking the complex conjugate of both the state and measurements. For simplicity we will study experiments with states and measurements that have real coefficients but note that an advantage of basing self-testing on EPR-steering eliminates the restriction to only real coefficients.

However, for an arbitrary physical experiment there may exist operations not included in the list above that leave the assemblage and reduced state unchanged. The essence of self-testing based on an assemblage and reduced state is to establish that the only operations a provider can perform that leave it unchanged are those described above.

\subsection{Reduced states and the purification principle}

Given our formalism, the self-testing of quantum states is rendered extremely easy due to the purification principle: every density matrix $\rho_{A}$ on some system $A$ can result as the marginal state of some bipartite pure state $\ket{\psi}_{AB}$ on the joint system $AB$ such that $\rho_{A}=\textrm{tr}(\ket{\psi}_{AB}\bra{\psi}_{AB})$, and this pure state is uniquely defined up to an isometry on system $B$. Therefore, in our formalism, we can observe that given a reduced state $\rho_{C}=\textrm{tr}_{P}(\ket{\psi}\bra{\psi})$ we can describe the state $\ket{\psi}$ upto an isometry on provider's system. In particular, due to the Schmidt decomposition of the reduced state $\rho_{C}=\sum_{i}\lambda_{i}\ket{\mu_{i}}\bra{\mu_{i}}$ (such that $\sum_{i}\lambda_{i}=1$ and $\lambda_{i}\geq0$ for all $i$) we have a purification of the form:
\begin{equation*}
\ket{\psi}=\sum_{i}\sqrt{\lambda_{i}}\ket{\mu_{i}}\ket{\nu_{i}}
\end{equation*}
where $\{\ket{\mu_{i}}\}_{i}$ ($\{\ket{\nu_{i}}\}_{i}$) is some set of orthogonal states in $\mathcal{H}_{C}$ ($\mathcal{H}_{P}$). The local isometry $\Phi:\mathcal{H}_{P}\rightarrow\mathcal{H}_{P}\otimes\mathcal{H}'_{P}$ then maps the set ($\{\ket{\nu_{i}}\}_{i}$) to another set of orthogonal states ($\{\ket{\nu'_{i}}\}_{i}$).

As a consequence of our formalism, we can establish that $\ket{\tilde{\psi}}$ and $\ket{\psi}$ are equivalent solely by checking to see if the reduced state $\tilde{\rho}_{C}=\textrm{tr}_{P}(\ket{\tilde{\psi}}\bra{\tilde{\psi}})$ is equal to the reduced state $\rho_{C}=\textrm{tr}_{P}(\ket{\psi}\bra{\psi})$. Another obvious consequence for entanglement verification between the client and provider is that they share some entanglement if and only if $\rho_{C}$ is mixed. This is purely a consequence of the assumption that they share a pure state. Indeed, it is cryptographically well-motivated to say that the provider produces a pure state since this gives the provider \textit{maximal information} about the devices that are used in a protocol.

Even though self-testing of states is rendered easy by our assumptions, the self-testing of measurements does not follow from only looking at the reduced state $\tilde{\rho}_{C}$. In other words, knowing the global pure $\ket{\psi}$ from the reduced state $\tilde{\rho}_{C}$, does not immediately imply that the provider is making the required measurements on a useful part of that pure state. It should be emphasized that in any one-sided device-independent quantum information protocol, measurements will be made on a state in any task to extract classical information from the systems, both trusted and untrusted. The self-testing of measurements made by an untrusted agent is, as explicitly stated in Eq. \eqref{equiv}, crucial. We give a simple example to illustrate this point. This is an example of a physical system that a provider can prepare and a measurement they can perform.

\begin{Example}
Establishing that the client and provider share a state that is equivalent to a reference state is not immediately useful. Consider the situation where the provider prepares the state $\ket{\psi'}=\frac{1}{\sqrt{2}}\left(\ket{0_{C}}\ket{0_{P_{1}}}\ket{0_{P_{2}}}+\ket{1_{C}}\ket{1_{P_{1}}}\ket{0_{P_{2}}}\right)$ where the subscripts $P_{1}$ and $P_{2}$ label two qubits that the provider retains and sends the qubit with the subscript $C$ to the client. The two qubits labelled by $P_{1}$ and $P_{2}$ can be jointly measured or individually measured. %\textcolor{red}{The state $\ket{\psi'}$ is locally equivalent to the singlet state $\ket{\psi} = \frac{1}{\sqrt{2}}(\ket{0_P}\ket{0_C} + \ket{1_P}\ket{1_C})$ but weather some practical task can be performed or not depends on how the provider measures qubits $P_{1}$ and $P_{2}$. The violation of Clauser-Horn-Shimony-Holt (CHSH) inequality represents the benchmark of success for many informational tasks done using singlet state. The maximal violation is achieved if the provider measures Pauli observables $X$ and $Z$, and the client measures $\frac{X+Z}{\sqrt{2}}$ and $\frac{-X+Z}{\sqrt{2}}$.  If provider's measurement solely consists of measuring qubit $P_{2}$ and ignoring qubit $P_{1}$ such that measurement projectors are of the form $\mathbb{I}_{P_{1}}\otimes (E_{a|x})_{P_{2}}$ one can easily check that whatever the form of $(E_{a|x})_{P_{2}}$ is the provider cannot violate CHSH inequality, given that the client performs optimal CHSH measurements. This is understandable since by ignoring qubit $P_{1}$, the client and the provider effectively share maximally mixed state, which is separable and cannot lead to a Bell inequality violation. This is not in contradiction with what is said before because the reduced state of the maximally entangled and the maximally mixed state is the same. CHSH inequality can be maximally violated with the state $\ket{\psi'}$ if the client measures $\frac{X'+Z'}{\sqrt{2}}$ and $\frac{-X'+Z'}{\sqrt{2}}$, while the client for example measures $Z'_{P_{1}}\otimes \mathbb{I}_{P_{2}}$ and $X'_{P_{1}}\otimes X'_{P_{2}}$, $X'$ and $Z'$ being a pair of anticommuting observables. This measurement configuration with state $\ket{\psi'}$ is one of the many possible locally isometric configurations with the singlet state and optimal CHSH measurements.}
In this example the provider's measurement solely consists of measuring qubit $P_{2}$ and ignoring qubit $P_{1}$ such that measurement projectors are of the form $\mathbb{I}_{P_{1}}\otimes (E_{a|x})_{P_{2}}$. Therefore, the reduced state of the client is $\rho_{C}=\frac{\mathbb{I}}{2}$ which indicates that the client and provider share a maximally entangled state. However, every element of the assemblage $\{\sigma_{a|x}\}_{a,x}$ is $\sigma_{a|x}=\frac{\mathbb{I}}{2}$, and thus unaffected by any measurement performed by the provider. Therefore we cannot say anything about the provider's measurements and, furthermore, the entanglement is not being utilised by the provider and will thus not be useful for any quantum information task. %Thus in this example $\rho_{C}=\frac{\mathbb{I}}{2}$ which allows us to establish that the between the client and provider they share a maximally entangled two-qubit state (as is obviously the case by inspecting $\ket{\psi}$). However, the reduced state of the client will never be altered by a measurement made by the provider (solely on his qubit $P_{2}$) and the maximally entangled state will not utilised for any task.}
\end{Example}
This example just highlights that in our scenario it only makes sense to establish equivalence between a physical experiment and reference experiment taking into account \textit{both the state and measurements}. The example motivates the need to study the assemblage generated in our scenario and not just the reduced state. Also, as will be shown later, this allows us to construct explicit isometries demonstrating equivalence between a physical and reference experiment instead of just knowing that such an isometry exists. In colloquial terms, being able to explicitly construct an isometry allows one to be able to ``locate" their desired state within the physical state.

So far we have assumed perfect equivalence between the reference and physical experiment as described by Eqs. \eqref{equiv}. In Sec. \ref{secrob} we extend our discussion to the case where equivalence can be established approximately which is known as robust self-testing. Instead of using the reduced state of the client and assemblage, we may wish to study self-testing given the correlations resulting from measurements on the assemblage and we discuss this in Sec. \ref{seccor}.

%say something about complex conjugate

\subsection{Robust assemblage-based one-sided self-testing}\label{secrob}

In this section we formally introduce \textit{robust assemblage-based one-sided self-testing} (AST) and indicate its advantages and limitations. Before this we need to recall some mathematical notation in order to discuss ``robustness". We need an appropriate distance measure between operators acting on a Hilbert space. To facilitate this we will use the Schatten $1$-norm $\Vert A\Vert_{1}$ for $A\in\mathcal{L}(\mathcal{H})$ being a linear operator acting on $\mathcal{H}$. This norm is directly related to $D(\rho,\sigma)$, the \textit{trace distance} between quantum states since $D(\rho,\sigma)=\frac{1}{2}\Vert\rho-\sigma\Vert_{1}$ for $\rho$, $\sigma\in\mathcal{D}(\mathcal{H})$. Equivalently, $D(\rho,\sigma)=\frac{1}{2}\sum_{i}\vert\lambda_{i}\vert$ where $\lambda_{i}$ is the $i$th eigenvalue of the operator $(\rho-\sigma)$. Another property of the trace distance is that when $\rho=\ket{a}\bra{a}$ and $\sigma=\ket{b}\bra{b}$ are pure then $D(\ket{a}\bra{a},\ket{b}\bra{b})=\sqrt{1-\vert\langle a\ket{b}\vert^2}$ \cite{NC}. 

The motivation for introducing a distance measure is clear when we consider imperfect experiments. That is, if our physical experiment deviates from the predictions of our reference experiment by a small amount can we be sure that our physical experiment is (up to a local isometry on $\mathcal{H}_{P}$) close (in the trace distance) to our reference experiment? Now we can utilise the trace distance to describe closeness between the physical state $\ket{\psi}$ and reference state $\ket{\tilde{\psi}}$. To whit, if $D(\rho_{C},\tilde{\rho}_{C})=\epsilon>0$ where $\tilde{\rho}_{C}=\textrm{tr}_{P}(\ket{\tilde{\psi}}\bra{\tilde{\psi}})$ and allowing for isometries $\Phi$ on the provider's side, then the minimal distance between physical and reference states will be the minimal value of 
\begin{equation}\label{distance}
D\left(\ket{\Phi}\bra{\Phi},\ket{\mathcal{A}}\bra{\mathcal{A}}\otimes\ket{\tilde{\psi}}\bra{
\tilde{\psi}}\right)=\sqrt{1-\vert\bra{\mathcal{A}}\langle{\tilde{\psi}}\ket{\Phi}\vert^2}
\end{equation}
for $\ket{\Phi}=\Phi(\ket{\psi})$. Clearly, $D(\ket{\Phi}\bra{\Phi},\ket{\mathcal{A}}\bra{\mathcal{A}}\otimes\ket{\tilde{\psi}}\bra{\tilde{\psi}})\geq D(\tilde{\rho}_{C},\rho_C)=\epsilon$ since the trace distance does not increase when tracing out the provider's sub-system.

This lower bound on the distance in Eq. \ref{distance} does not tell us that there is an isometry achieving this bound. We wish to be able to state that there exists an isometry for which the distance in Eq. \eqref{distance} is small. Furthermore it would be preferable to be able to construct this isometry. This is, in essence, robust self-testing. We now formalise this intuition in the following definition:

\begin{definition}
Given a reference experiment consisting of the state $\ket{\tilde{\psi}}\in\mathcal{H}_{C}\otimes\mathcal{H}'_{P}$ with reduced state $\tilde{\rho}_{C}$ and measurements $\{\tilde{E}_{a|x}\}_{a,x}$ such that the assemblage $\{\tilde{\sigma}_{a|x}\}_{a,x}$ has elements $\tilde{\sigma}_{a|x}=\textrm{tr}_{P}\left(\mathbb{I}_{C}\otimes \tilde{E}_{a|x}\ket{\tilde{\psi}}\right)$, $\forall$ $a$, $x$. Also given a physical experiment with the state $\ket{\psi}\in\mathcal{H}_{C}\otimes\mathcal{H}_{P}$, reduced state $\rho_{C}$ and measurements $\{E_{a|x}\}_{a,x}$ such that the assemblage $\{\sigma_{a|x}\}_{a,x}$ has elements $\sigma_{a|x}=\textrm{tr}_{P}\left(\mathbb{I}_{C}\otimes E_{a|x}\ket{\psi}\right)$, $\forall$ $a$, $x$. If, for some real $\epsilon>0$, $D(\tilde{\rho}_{C},\rho_C)\leq\epsilon$ and $\Vert \tilde{\sigma}_{a|x}-\sigma_{a|x}\Vert_{1}\leq\epsilon$, $\forall$ $a$, $x$, then $f(\epsilon)$\textbf{-robust assemblage-based one-sided self-testing ($f(\epsilon)$-AST) is possible} if the assemblage $\{\sigma_{a|x}\}_{a,x}$ implies that there exists an isometry $\Phi:\mathcal{H}_{P}\rightarrow\mathcal{H}_{P}\otimes\mathcal{H}'_{P}$ such that
\begin{align}
\label{eq:defAST}
D\left(\ket{\Phi}\bra{\Phi},\ket{\mathcal{A}}\bra{\mathcal{A}}\otimes\ket{\tilde{\psi}}\bra{\tilde{\psi}}\right)&\leq f(\epsilon),\nonumber\\
\Vert\ket{\Phi,E_{a|x}}\bra{\Phi,E_{a|x}}-\ket{\mathcal{A}}\bra{\mathcal{A}}\otimes(\mathbb{I}_{C}\otimes \tilde{E}_{a|x})\ket{\tilde{\psi}}\bra{\tilde{\psi}}(\mathbb{I}_{C}\otimes \tilde{E}_{a|x})\Vert_{1}&\leq f(\epsilon)
\end{align}
for $\ket{\Phi}=\Phi(\ket{\psi})$, $\ket{\Phi,E_{a|x}}=\Phi(\mathbb{I}_{C}\otimes E_{a|x}\ket{\psi})$, $\ket{\mathcal{A}}\in\mathcal{H}_{P}$ and $f:\mathbb{R}\rightarrow\mathbb{R}$.
\end{definition}

In this definition, in order to simplify matters, we have bounded both the distance between physical and reference states both with and without measurements by the function $f(\epsilon)$. It will often be the case that the trace distance between states (without measurements) will be smaller than the distance between measured states, but we are considering the \textit{worst case} analysis. In further study, it could be of interest to give a finer distinction between these distance measures in the definition.

Note also that, in this definition, we only ask for the existence of an isometry. Later, in Sec. \ref{sec2}, we will construct an isometry for robust self-testing which will be more useful for various protocols. Also, for this definition to be useful, a desirable function would be $f(\epsilon)\leq O(\epsilon^{\frac{1}{p}})$ where $p$ is upper-bounded by a small positive integer. If $D(\tilde{\rho}_{C},\rho_C)=\epsilon$, as mentioned earlier this establishes a lower bound on the distance between physical and reference experiments, and so the ideal case would be $O(\epsilon)$-AST. We now give a simple example to show that, in general, this ideal case is not obtainable.
\begin{Example}
The client has a three-dimensional Hilbert space $\mathcal{H}_{C}$. The reference experiment consists of the state $\ket{\tilde{\psi}}=\frac{1}{\sqrt{2}}\left(\ket{0_{C}0_{P}}+\ket{1_{C}1_{P}}\right)$ with measurements $\{\tilde{E}_{0|0}=\ket{0_{P}}\bra{0_{P}},\tilde{E}_{1|0}=\ket{1_{P}}\bra{1_{P}},\tilde{E}_{0|1}=\ket{+_{P}}\bra{+_{P}},\tilde{E}_{1|1}=\ket{-_{P}}\bra{-_{P}}\}$ and $\ket{\pm_{P}}=\frac{1}{\sqrt{2}}\left(\ket{0_P}\pm\ket{1_P}\right)$ where $\mathcal{H}'_{P}$ is a two-dimensional Hilbert space. The assemblage for this reference experiment has the following elements:
\begin{align}
\tilde{\sigma}_{0|0}&=\frac{1}{2}\ket{0_{C}}\bra{0_{C}},&\tilde{\sigma}_{1|0}&=\frac{1}{2}\ket{1_{C}}\bra{1_{C}},\nonumber\\
\tilde{\sigma}_{0|1}&=\frac{1}{2}\ket{+_{C}}\bra{+_{C}},&\tilde{\sigma}_{1|1}&=\frac{1}{2}\ket{-_{C}}\bra{-_{C}}.\nonumber
\end{align}
The physical experiment consists of the state $\ket{\psi}=\sqrt{1-\epsilon}\ket{\tilde{\psi}}\ket{0_{P'}}+\sqrt{\epsilon}\ket{\xi}\ket{1_{P'}}$ where $\ket{\xi}=\ket{2_{C}0_{P}}$ and the subscript $P'$ denotes a second qubit that the provider has in their possession. The measurements in the physical experiment are $E_{i|j}=\tilde{E}_{i|j}\otimes\ket{0_{P'}}\bra{0_{P'}}+\ket{i_{P}}\bra{i_{P}}\otimes\ket{1_{P'}}\bra{1_{P'}}$ for $i\in\{0,1\}$. The state $\ket{\psi}$ has the reduced state $\rho_{C}=\frac{(1-\epsilon)}{2}(\ket{0_{C}}\bra{0_{C}}+\ket{1_{C}}\bra{1_{C}})+\epsilon
\ket{2_{C}}\bra{2_{C}}$ thus implying that $D(\rho_{C},\tilde{\rho}_{C})=\epsilon$. The assemblage for this physical experiment then has the elements:
\begin{align}
\sigma_{0|0}&=\frac{(1-\epsilon)}{2}\ket{0_{C}}\bra{0_{C}}+\epsilon\ket{2_{C}}\bra{2_{C}},&
\sigma_{1|0}&=\frac{(1-\epsilon)}{2}\ket{1_{C}}\bra{1_{C}},\nonumber\\
\sigma_{0|1}&=\frac{(1-\epsilon)}{2}\ket{+_{C}}\bra{+_{C}}+\epsilon\ket{2_{C}}\bra{2_{C}},&
\sigma_{1|1}&=\frac{(1-\epsilon)}{2}\ket{-_{C}}\bra{-_{C}}.\nonumber
\end{align}
From the above assemblages we observe that $\Vert \tilde{\sigma}_{a|x}-\sigma_{a|x}\Vert_{1}<\frac{3}{2}\epsilon=\epsilon'$, $\forall$ $a$, $x$. Here we have just defined a new closeness parameter $\epsilon'$ for the convenience of our definitions.  Given these physical and reference experiments, we now wish to calculate a lower bound on $D\left(\ket{\Phi}\bra{\Phi},\ket{\mathcal{A}}\bra{\mathcal{A}}\otimes\ket{\tilde{\psi}}\bra{\tilde{\psi}}\right)$ for all possible isometries $\Phi$ in the definition above; this will give a lower-bound on the function $f(\epsilon')$ for $f(\epsilon')$-AST. To do this, we introduce the notation $\ket{\tilde{0}}$ for the ancillae that the provider can introduce and $U_{P}$ as the unitary that they can perform jointly on the ancillae and their share of the physical state $\ket{\psi}$. This then gives us:
\begin{equation*}
D\left(U_{P}\left(\ket{\psi}\bra{\psi}\otimes\ket{\hat{0}}\bra{\hat{0}}\right)U_{P}^{\dagger},\ket{\mathcal{A}}\bra{\mathcal{A}}\otimes\ket{\tilde{\psi}}\bra{\tilde{\psi}}\right)=\sqrt{1-F^{2}},
\end{equation*}
where
\begin{align}
F&=\vert\bra{\mathcal{A}}\bra{\tilde{\psi}}U\ket{\psi}\ket{\hat{0}}\vert\nonumber\\
&=\frac{\sqrt{1-\epsilon}}{2}\vert\bra{\mathcal{A}}\left(\bra{0_{C}0_{P}}(\mathbb{I}_{C}\otimes U_{P})\ket{0_{C}0_{P}}+\bra{1_{C}1_{P}}(\mathbb{I}_{C}\otimes U_{P})\ket{1_{C}1_{P}}\right)\ket{\hat{0}}\vert,\nonumber
\end{align}
where $\mathbb{I}_{C}$ is the identity on the client's system. Thus maximizing this quantity for all isometries, we obtain the maximal value $F^{*}=\sqrt{1-\epsilon}=\sqrt{1-\frac{2\epsilon'}{3}}$ and the lower bound $D\left(\ket{\Phi}\bra{\Phi},\ket{\mathcal{A}}\bra{\mathcal{A}}\otimes\ket{\tilde{\psi}}\bra{\tilde{\psi}}\right)\geq\sqrt{\frac{2\epsilon'}{3}}$. 
\end{Example}
This example excludes the possibility of having $O(\epsilon)$-AST given that the client's Hilbert space is three-dimensional. We will later return to this reference experiment in Sec. \ref{sec2a} with the modification that the client's Hilbert space is two-dimensional.

\subsection{Robust correlation-based one-sided self-testing}\label{seccor}

As outlined earlier, EPR-steering can be studied from the point-of-view of the probabilities obtained from measurements performed on elements of an assemblage, i.e. known measurements made by the trusted party. This point-of-view is native to Bell non-locality and is suitable for making further parallels between non-locality and EPR-steering. In this regard one can construct  
EPR-steering inequalities (the EPR-steering analogues of Bell inequalities) which can be written as a linear combination of the measurement probabilities \cite{CWJR}. The two figures-of-merit, assemblages and measurement correlations, lead to a certain duality in the theory of EPR steering. The approach that one will use depends on the underlying scenario. In the case when correlations are obtained by performing a tomographically complete set of measurements (on the trusted system) the two approaches become completely equivalent. However, in some cases probabilities obtained by performing a tomographically incomplete set of measurements, or even just the amount of violation of some steering inequality can provide all necessary information. Another possibility is that a trusted party can perform only two measurements and nothing more, i.e. has no resources to perform complete tomography. In this section we consider the definition and utility of defining robust self-testing with respect to these probabilities for an appropriate notion of robustness. This approach to self-testing is not immediately equivalent to the notion of AST defined previously (even if tomographically complete measurements are made) for reasons that will be become clear.

Recall the probabilities $p(a,b|x,y)=\textrm{tr}(F_{b|y}\sigma_{a|x})$ for $F_{b|y}$ being elements of general measurement associated with the outcome $b$ for measurement choice $y$ such that $\sum_{b}F_{b|y}=\mathbb{I}_{C}$. Naturally, we can also obtain the probabilities $p(b|y)=\textrm{tr}(F_{b|y}\rho_{C})$. In addition to the ``physical probabilities" $p(a,b|x,y)$, we have the ``reference probabilities" $\{\tilde{p}(a,b|x,y)\}$ which refer to the probabilities resulting from making the same measurements $\{F_{b|y}\}_{b,y}$ on a reference assemblage $\{\tilde{\sigma}_{a|x}\}$ as described above. Performing robust self-testing given these probabilities will be the focus of this section.

A useful definition of the Schatten $1$-norm is $\Vert A\Vert_{1}={\textrm{sup}}_{\Vert B\Vert\leq 1}\vert\textrm{tr}(BA)\vert$ where $\Vert\cdot\Vert$ is the operator norm. Since $F_{b|y}$ is a positive operator with operator norm upper bounded by $1$ and if $D(\rho_{C},\tilde{\rho}_{C})\leq \epsilon$ and for all elements $\sigma_{a|x}$ of an assemblage $\Vert \tilde{\sigma}_{a|x}-\sigma_{a|x}\Vert_{1}\leq\epsilon$ we can conclude that
\begin{align}
\vert \tilde{p}(a,b|x,y)-p(a,b|x,y)\vert&=\vert \textrm{tr}\left[F_{b|y}\left(\sigma_{a|x}-\tilde{\sigma}_{a|x}\right)\right]\vert\leq\Vert \tilde{\sigma}_{a|x}-\sigma_{a|x}\Vert_{1}\leq\epsilon,\nonumber\\
\vert p(b|y)-\tilde{p}(b|y)\vert&=\vert \textrm{tr}\left[F_{b|y}\left(\rho_{C}-\tilde{\rho}_{C}\right)\right]\vert\leq2D(\rho_{C},\tilde{\rho}_{C})\leq 2\epsilon\nonumber
\end{align}
for all $a$, $b$, $x$, $y$. This then establishes that knowledge of the assemblage and establishing its closeness to the assemblage associated with a reference experiment implies closeness in the probabilities obtained from both experiments. Clearly, the converse is not necessarily true and closeness in probabilities does not always imply closeness of reduced states and assemblages. Assemblages can be calculated from the statistics obtained from performing tomographically complete measurements, and then the distance (in Schatten $1$-norm) between this assemblage and some ideal assemblage can be calculated. However, even for tomographically complete measurements $\{F_{b|y}\}_{b,y}$, we only have that $\vert \textrm{tr}\left[F_{b|y}\left(\sigma_{a|x}-\tilde{\sigma}_{a|x}\right)\right]\vert\leq\Vert \tilde{\sigma}_{a|x}-\sigma_{a|x}\Vert_{1}$ thus having $\vert \textrm{tr}\left[F_{b|y}\left(\sigma_{a|x}-\tilde{\sigma}_{a|x}\right)\right]\vert\leq \epsilon$ does not imply $\Vert \tilde{\sigma}_{a|x}-\sigma_{a|x}\Vert_{1}\leq\epsilon$. This goes to show that the AST approach is distinct from solely looking at the difference between probabilities.

%However, the trace distance $D(\rho,\sigma)$ can be seen as the largest \textit{statistical} trace distance between two probability distributions resulting from the same measurement on $\rho$ and $\sigma$ (maximising over all possible measurements) \cite{NC}. Also since the trace distance is directly connected to $\Vert\cdot\Vert_{1}$, we may view AST from the point-of-view of probabilities for an optimally chosen measurement on the client's system, \added{which is known as the Helstrom measurement \cite{NC}}. In this section the measurements are fixed and are then generally sub-optimal from the point-of-view of AST.

Inspired by the literature in standard self-testing (see, e.g. Refs. \cite{MKYS,BQC}), it should still be possible to attain robust self-testing based on probabilities for measurements on assemblages and with this in mind, we give the following definition:

\begin{definition}
Given a reference experiment consisting of the state $\ket{\tilde{\psi}}\in\mathcal{H}_{C}\otimes\mathcal{H}'_{P}$ with reduced state $\tilde{\rho}_{C}$ and measurements $\{\tilde{E}_{a|x}\}_{a,x}$ such that the assemblage $\{\tilde{\sigma}_{a|x}\}_{a,x}$ has elements $\tilde{\sigma}_{a|x}=\textrm{tr}_{P}\left(\mathbb{I}_{C}\otimes \tilde{E}_{a|x}\ket{\tilde{\psi}}\right)$, $\forall$ $a$, $x$. Also given a physical experiment with the state $\ket{\psi}\in\mathcal{H}_{C}\otimes\mathcal{H}_{P}$, reduced state $\rho_{C}$ and measurements $\{E_{a|x}\}_{a,x}$ such that the assemblage $\{\sigma_{a|x}\}_{a,x}$ has elements $\sigma_{a|x}=\textrm{tr}_{P}\left(\mathbb{I}_{C}\otimes E_{a|x}\ket{\psi}\right)$, $\forall$ $a$, $x$. Additionally given a set $\{F_{b|y}\}_{b,y}$ of general measurements that act on $\mathcal{H}_{C}$ such that $p(a,b|x,y)=\textrm{tr}\left(F_{b|y}\sigma_{a|x}\right)$ and $\tilde{p}(a,b|x,y)=\textrm{tr}\left(F_{b|y}\tilde{\sigma}_{a|x}\right)$ $\forall$ $a$, $x$. If, for some real $\epsilon>0$,
\begin{align}
\vert \tilde{p}(a,b|x,y)-p(a,b|x,y)\vert\leq\epsilon,\nonumber\\
\vert \tilde{p}(b|y)-p(b|y)\vert\leq\epsilon,\nonumber\\
\vert \tilde{p}(a|x)-p(a|x)\vert\leq\epsilon,\nonumber
\end{align}
 $\forall$ $a$, $x$, $b$, $y$, then $f(\epsilon)$\textbf{-robust correlation-based one-sided self-testing ($f(\epsilon)$-CST)} is possible if the probabilities imply that there exists an isometry $\Phi:\mathcal{H}_{P}\rightarrow\mathcal{H}_{P}\otimes\mathcal{H}'_{P}$ such that
\begin{align}
D\left(\ket{\Phi}\bra{\Phi},\ket{\mathcal{A}}\ket{\tilde{\psi}}\bra{\mathcal{A}}\bra{\tilde{\psi}}\right)&\leq f(\epsilon),\nonumber\\
\Vert\ket{\Phi,E_{a|x}}\bra{\Phi,E_{a|x}}-\ket{\mathcal{A}}(\mathbb{I}_{C}\otimes \tilde{E}_{a|x})\ket{\tilde{\psi}}\bra{\mathcal{A}}\bra{\tilde{\psi}}(\mathbb{I}_{C}\otimes \tilde{E}_{a|x})\Vert_{1}&\leq f(\epsilon)\nonumber
\end{align}
for $\ket{\Phi}=\Phi(\ket{\psi})$, $\ket{\Phi,E_{a|x}}=\Phi(\mathbb{I}_{C}\otimes E_{a|x}\ket{\psi})$, $\ket{\mathcal{A}}\in\mathcal{H}_{P}$ and $f:\mathbb{R}\rightarrow\mathbb{R}$.
\end{definition}

Instead of directly bounding the distance between reference and physical probabilities, we can indirectly bound this distance by utilising an EPR-steering inequality. In the literature on standard self-testing, probability distributions that near-maximally violate a Bell inequality robustly self-test the state and measurements that produce the maximal violation \cite{MKYS,BQC}. As a first requirement, there needs to be a unique probability distribution that achieves this maximal violation, and we now have many examples of Bell inequalities where this happens. The same applies to EPR-steering inequalities: there needs to be a unique assemblage that produces the maximal violation of an EPR-steering inequality. Furthermore this unique assemblage needs to imply a unique reference experiment (up to a local isometry). For EPR-steering inequalities of the form $\sum_{a|x}\alpha_{a,x}\textrm{tr}\left(F_{a|b}\sigma_{a|x}\right)\geq 0$ for real numbers $\alpha_{a,x}$, any assemblage that violates this inequality is necessarily \textit{steerable}. If all quantum assemblages satisfy $\sum_{a|x}\alpha_{a,x}\textrm{tr}\left(F_{a|b}\sigma_{a|x}\right)\geq -\beta$ for some positive real number $\beta$ then $-\beta$ is the maximal violation of the EPR-steering inequality. If we consider probabilities of the form $p(a,b|x,y)=\textrm{tr}\left(F_{b|y}\sigma_{a|x}\right)$ that satisfy $\sum_{a|x}\alpha_{a,x}\textrm{tr}\left(F_{a|b}\sigma_{a|x}\right)\leq-(\beta-\epsilon)$ then they are at most $\epsilon$-far from the reference experiment that produces the maximal violation of $-\beta$. We will make use of this approach to CST in Sec. \ref{sec2b}.

We now briefly return to the issue of complex conjugation. As mentioned above and discussed in Appendix \ref{app1}, the AST approach is advantageous to the standard self-testing approach in that we can rule out the state and measurements in the reference experiment both being the complex conjugate of our ideal reference experiment. One issue with CST is that since we are reconsidering probabilities for a fixed set of measurements made by the client, if the measurements are invariant under complex conjugation then the provider can prepare a state and make measurements that are both the complex conjugate of the ideal case without altering the statistics. This can be remedied by the client choosing measurements that have complex entries as long as it does not drastically affect the ability to achieve $f(\epsilon)$-CST.

\section{Self-testing of an ebit}\label{sec2}

In this section, we look at the self-testing of the maximally entangled two-qubit state (or, \textit{ebit}). This is a totemic state in the self-testing literature (e.g. \cite{MKYS,BQC}) and that it is possible to do RST for this state is now well-established: it is achieved by looking at probability distributions that near-maximally violate the CHSH inequality. That is, since the maximal violation of the CHSH inequality is, say, $2\sqrt{2}$ then probability distributions that give a violation of $2\sqrt{2}-\epsilon$ result from quantum states that are $O(\sqrt{\epsilon})$-close to the ebit (up to local isometries). In current analytical approaches the constant in front of the $\sqrt{\epsilon}$ term can be shown to be quite large. However, there are numerical approaches that substantially improve upon this constant by several orders of magnitude \cite{YVBSN,Bancal}. 

We turn to AST and CST to see if we can improve the current approaches that appear for RST. In particular, in Sec. \ref{sec2a} we look at analytical methods for AST and show that, for the ebit, $O(\sqrt{\epsilon})$-AST is possible where the constant in front of the $\sqrt{\epsilon}$ term is reasonable. In Sec. \ref{sec2b} we turn to numerical methods for CST where the study of probabilities instead of assemblages is currently more amenable. We show that $O(\sqrt{\epsilon})$-CST is possible and also that our numerical methods do better than existing numerical methods for RST. Thirdly, in Sec. \ref{sec2c} we then show that $O(\sqrt{\epsilon})$-AST is essentially the best that one can hope for by explicitly giving a physical state and measurements where $f(\epsilon)$ in the definition of $f(\epsilon)$-AST will be at least $\sqrt{\epsilon}$. In other words, $O(\epsilon)$-AST is impossible.

\subsection{Analytical results utilising the SWAP isometry}\label{sec2a}

We first set-out the reference experiment that we will be studying for the rest of this section. It consists of the experiment described in Sec. \ref{secrob} but now with the client's Hilbert space being two-dimensional. Recall that the state is $\ket{\tilde{\psi}}=\frac{1}{\sqrt{2}}\left(\ket{00}+\ket{11}\right)$ and the measurements are $\{\tilde{E}_{0|0}=\ket{0}\bra{0},\tilde{E}_{1|0}=\ket{1}\bra{1},\tilde{E}_{0|1}=\ket{+}\bra{+},\tilde{E}_{1|1}=\ket{-}\bra{-}\}$ and $\ket{\pm}=\frac{1}{\sqrt{2}}\left(\ket{0}\pm\ket{1}\right)$ where we have dropped the subscripts for reasons of clarity. The assemblage for this reference experiment has the following elements:
\begin{align}
\tilde{\sigma}_{0|0}&=\frac{1}{2}\ket{0}\bra{0},&\tilde{\sigma}_{1|0}&=\frac{1}{2}\ket{1}\bra{1},\nonumber\\
\tilde{\sigma}_{0|1}&=\frac{1}{2}\ket{+}\bra{+},&\tilde{\sigma}_{1|1}&=\frac{1}{2}\ket{-}\bra{-}.\nonumber
\end{align}
We will henceforth call this reference experiment the \textit{EPR experiment}. We can now state a result about AST for this experiment.

\begin{theorem}\label{thm1}
For the EPR experiment, $f(\epsilon)$-robust assemblage-based one-sided self-testing is possible for $f(\epsilon)=24\sqrt{\epsilon}+\epsilon$.
\end{theorem}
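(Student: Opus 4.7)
The plan is to employ the SWAP-isometry technique ubiquitous in self-testing \cite{MKYS}, specialised here to the one-sided case: only the provider's system needs to be rotated by an isometry, and the client's Paulis $Z_C$, $X_C$ on $\mathcal{H}_C$ are already available by assumption rather than through a parallel untrusted SWAP circuit. The first step is to define provider observables $Z_P := E_{0|0}-E_{1|0}$ and $X_P := E_{0|1}-E_{1|1}$; orthogonality of the projectors immediately gives $Z_P^2 = X_P^2 = \mathbb{I}_P$, so these are genuine $\pm 1$-observables.

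The technical heart of the argument is a pair of approximate swap relations,
\begin{equation*}
\|(\mathbb{I}_C\otimes Z_P - Z_C\otimes\mathbb{I}_P)\ket{\psi}\| \le 2\sqrt{\epsilon}, \qquad \|(\mathbb{I}_C\otimes X_P - X_C\otimes\mathbb{I}_P)\ket{\psi}\| \le 2\sqrt{\epsilon},
\end{equation*}
which I would prove by expanding the squared norm as $2 - 2\langle\psi|Z_C\otimes Z_P|\psi\rangle = 2 - 2\,\mathrm{tr}[Z(\sigma_{0|0}-\sigma_{1|0})]$, noting that the corresponding reference quantity equals $1$, and bounding each $|\mathrm{tr}[Z(\sigma_{a|0}-\tilde\sigma_{a|0})]| \le \|\sigma_{a|0}-\tilde\sigma_{a|0}\|_1 \le \epsilon$ via the variational characterisation of the Schatten $1$-norm (since $\|Z\|_\infty = 1$). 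The $X$ relation is identical, using $\{\tilde\sigma_{a|1}\}$.

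Next, I would introduce the standard SWAP isometry $\Phi: \mathcal{H}_P \to \mathcal{H}_P \otimes \mathcal{H}'_P$ built from an ancilla $\ket{0}_{P'}$, Hadamards on $P'$, controlled-$Z_P$ and controlled-$X_P$ gates, yielding
\begin{equation*}
\Phi(\ket{\psi}) = \tfrac{1}{2}\left[(\mathbb{I}+\mathbb{I}\otimes Z_P)\ket{\psi}\ket{0}_{P'} + (\mathbb{I}\otimes X_P)(\mathbb{I}-\mathbb{I}\otimes Z_P)\ket{\psi}\ket{1}_{P'}\right].
\end{equation*}
Substituting $\mathbb{I}\otimes Z_P$ and $\mathbb{I}\otimes X_P$ acting on $\ket{\psi}$ by $Z_C$ and $X_C$ via the swap relations, and using $Z_C X_C = -X_C Z_C$ to pull all Paulis onto the client's side, I obtain (after writing $\ket{\psi} = \ket{0}_C\ket{\alpha_0}_P + \ket{1}_C\ket{\alpha_1}_P$) that $\Phi(\ket{\psi})$ is close to $\sqrt{2}\ket{\alpha_0}_P\otimes\ket{\tilde\psi}_{CP'}$. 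Identifying $\ket{\mathcal{A}} := \ket{\alpha_0}/\|\alpha_0\|$ and invoking $D(\rho_C,\tilde\rho_C) \le \epsilon$---which bounds both $|\|\alpha_0\|^2 - 1/2|$ and $|\langle\alpha_0|\alpha_1\rangle|$ by $2\epsilon$---supplies the $O(\epsilon)$ renormalisation correction responsible for the additive $\epsilon$ term in $f(\epsilon) = 24\sqrt{\epsilon}+\epsilon$. For the measurement-conditioned states, the algebraic identities $(\mathbb{I}+Z_P)(\mathbb{I}-Z_P) = 0$ and $(\mathbb{I}\pm Z_P)^2 = 2(\mathbb{I}\pm Z_P)$ collapse $\Phi((\mathbb{I}\otimes E_{a|0})\ket{\psi})$ to a single term per outcome; the $E_{a|1}$ case is handled analogously using the controlled-$X_P$ branch. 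Finally, the inequality $\||a\rangle\langle a| - |b\rangle\langle b|\|_1 \le 2\max(\|a\|,\|b\|)\,\||a\rangle - |b\rangle\|$ converts all vector-norm bounds into the $1$-norm bounds demanded by Eq.~\eqref{eq:defAST}.

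The hard part is not any individual inequality but the disciplined accounting of error terms: each insertion of an approximate swap relation contributes $2\sqrt{\epsilon}$ to some vector-norm difference, and the worst-case total over $\Phi(\ket{\psi})$ and the four measured states yields the constant $24$. A secondary subtlety is that the swap relations were proven for $\ket{\psi}$ only, so whenever $\mathbb{I}\otimes Z_P$ or $\mathbb{I}\otimes X_P$ must be inserted inside a projected state $(\mathbb{I}\otimes E_{a|x})\ket{\psi}$ one must either commute the observable through the projector (when they share eigenvectors, as for $Z_P$ with $E_{a|0}$) or exploit anti-commutation and absorb the resulting cross term into the error budget---this is where the four measurement outcomes accumulate roughly equal $\sqrt{\epsilon}$-sized contributions that, once combined, produce the stated constant.
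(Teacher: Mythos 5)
Your proposal follows essentially the same route as the paper's proof: the same SWAP isometry, the same key lemma converting closeness of assemblage elements (and reduced states) into $O(\sqrt{\epsilon})$ vector-norm bounds for replacing provider operators $E_{a|x}$, $X$ by client operators $\tilde{E}_{a|x}$, $\tau_x$ acting on $\ket{\psi}$, the same choice $\ket{\mathcal{A}}\propto\langle 0_{C}\ket{\psi}$, and the same rank-one $1$-norm bound plus triangle-inequality bookkeeping to accumulate the constant. The only cosmetic difference is that you phrase the swap relations directly at the level of the $\pm 1$ observables $Z_P, X_P$ (getting $2\sqrt{\epsilon}$ there in one step), whereas the paper proves them for the projectors and then doubles to handle $X$ versus $\tau_x$.
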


Before proving this theorem we will present two useful observations that will be used in the proof. The first observation is a lemma about the norm that we are using while the second is specific to the self-testing of the EPR experiment. We require the notation $\Vert\ket{v}\Vert=\sqrt{\langle v\ket{v}}$. 

\begin{lemma}\label{goodlem}
For any two vectors $\ket{u}$, $\ket{v}$ where $\Vert{\ket{u}}\Vert\leq 1$ and $\Vert{\ket{v}}\Vert\leq 1$, if $\Vert\ket{u}-\ket{v}\Vert\leq\eta\leq 1$, then for another vector $\ket{t}$ such that $\Vert{\ket{t}}\Vert\leq \beta$, $\Vert{\left(\ket{u}-\ket{v}\right)}\bra{t}\Vert_{1}\leq\beta\eta$ and $\Vert\ket{t}{\left(\bra{u}-\bra{v}\right)}\Vert_{1}\leq\beta\eta$
\end{lemma}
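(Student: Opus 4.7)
The plan is to reduce the claim to the basic fact that the Schatten $1$-norm of a rank-one operator $\ket{a}\bra{b}$ equals $\Vert\ket{a}\Vert\cdot\Vert\ket{b}\Vert$, since the operator $(\ket{u}-\ket{v})\bra{t}$ is manifestly rank one.

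Concretely, I would invoke the dual characterization of the trace norm already used in Section \ref{seccor}, namely $\Vert A\Vert_{1}=\sup_{\Vert B\Vert\leq 1}|\textrm{tr}(BA)|$ where $\Vert\cdot\Vert$ denotes the operator norm. Setting $A=(\ket{u}-\ket{v})\bra{t}$ and using cyclicity of the trace, one has $\textrm{tr}(BA)=\bra{t}B(\ket{u}-\ket{v})$. Applying the Cauchy--Schwarz inequality together with the submultiplicativity of the operator norm on vectors yields
\begin{equation*}
|\bra{t}B(\ket{u}-\ket{v})|\leq \Vert\ket{t}\Vert\cdot\Vert B(\ket{u}-\ket{v})\Vert \leq \Vert\ket{t}\Vert\cdot\Vert B\Vert\cdot\Vert\ket{u}-\ket{v}\Vert\leq \beta\cdot 1\cdot\eta,
\end{equation*}
uniformly in $B$ with $\Vert B\Vert\leq 1$. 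Taking the supremum over such $B$ gives $\Vert(\ket{u}-\ket{v})\bra{t}\Vert_{1}\leq\beta\eta$, the first stated inequality.

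For the second inequality I would simply take the adjoint: $\bigl((\ket{u}-\ket{v})\bra{t}\bigr)^{\dagger}=\ket{t}(\bra{u}-\bra{v})$, and since the trace norm is invariant under taking adjoints (the singular values of $A$ and $A^{\dagger}$ coincide), the bound $\beta\eta$ transfers directly. The hypotheses $\Vert\ket{u}\Vert\leq 1$, $\Vert\ket{v}\Vert\leq 1$ and $\eta\leq 1$ are never actually used for the bound itself; they only ensure that the quantities of interest stay in a regime where this lemma is later applied (typically in conjunction with the Fuchs--van de Graaf-style manipulations in Theorem \ref{thm1}). There is essentially no obstacle: the argument is a one-line consequence of the variational definition of $\Vert\cdot\Vert_{1}$ and Cauchy--Schwarz, and the only thing to be mildly careful about is the order of factors when taking the adjoint for the second claim.
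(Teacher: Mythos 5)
Your proof is correct, but it takes a different (equally short) route from the paper's. The paper exploits the rank-one structure directly: since $B=(\ket{u}-\ket{v})\bra{t}$ has rank one, its trace norm coincides with its Hilbert--Schmidt norm, $\Vert B\Vert_{1}=\sqrt{\textrm{tr}(BB^{\dagger})}=\Vert\ket{t}\Vert\cdot\Vert\ket{u}-\ket{v}\Vert$, which gives the bound as an exact evaluation rather than an estimate; the second inequality then follows, as in your argument, from $\Vert B\Vert_{1}=\Vert B^{\dagger}\Vert_{1}$. You instead go through the variational characterization $\Vert A\Vert_{1}=\sup_{\Vert B\Vert\leq 1}\vert\textrm{tr}(BA)\vert$, cyclicity of the trace, and Cauchy--Schwarz. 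Both are sound one-line arguments; the paper's version has the minor advantage of showing that $\beta\eta$ is attained with equality when $\Vert\ket{t}\Vert=\beta$ and $\Vert\ket{u}-\ket{v}\Vert=\eta$ (so the lemma is tight), while yours has the advantage of reusing machinery the paper already introduces in Sec.~\ref{seccor} and of not needing the identity between the trace norm and the Hilbert--Schmidt norm for rank-one operators. Your closing remark that the hypotheses $\Vert\ket{u}\Vert\leq 1$, $\Vert\ket{v}\Vert\leq 1$ and $\eta\leq 1$ are not actually used in the bound is accurate and matches the paper, where these conditions only reflect the regime in which the lemma is later applied.
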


\begin{proof}
This fact essentially follows from the definition of $\Vert\cdot\Vert$. That is, $\Vert\ket{u}-\ket{v}\Vert=\sqrt{\langle{u}\ket{u}+\langle{v}\ket{v}-\langle{u}\ket{v}-\langle{v}\ket{u}}$ and since the rank of $B=\left(\ket{u}-\ket{v}\right)\bra{t}$ is $1$ then the $\Vert{B}\Vert_{1}=\sqrt{\textrm{tr}\left(BB^{\dagger}\right)}=\Vert\ket{t}\Vert\sqrt{\langle{u}\ket{u}+\langle{v}\ket{v}-\langle{u}\ket{v}-\langle{v}\ket{u}}$ which concludes our proof (along with the fact that $\Vert B\Vert_{1}=\Vert B^{\dagger}\Vert_{1}$).
\end{proof}

The next observation follows from the conditions outlined in the definition of $f(\epsilon)$-AST and is as follows:

\begin{lemma}\label{niceobs}
If $\Vert\sigma_{a|x}-\tilde{\sigma}_{a|x}\Vert_{1}\leq\epsilon$ and $D(\rho_{C},\tilde{\rho}_{C})\leq\epsilon$ then
\begin{equation*}
\Vert\mathbb{I}_{C}\otimes E_{a|x}\ket{\psi}-\tilde{E}_{a|x}\otimes\mathbb{I}_{P}\ket{\psi}\Vert\leq 2\sqrt{\epsilon}
\end{equation*}
\end{lemma}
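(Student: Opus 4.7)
The plan is to reduce everything to inner-product manipulations, exploiting that both operators in the difference are commuting projectors acting on disjoint factors. First I would set $A = \mathbb{I}_{C}\otimes E_{a|x}$ and $B = \tilde{E}_{a|x}\otimes \mathbb{I}_{P}$ and note that, since $A$ and $B$ act on complementary tensor factors, they commute, and both are projectors. Expanding the squared norm then gives
\begin{equation*}
\bigl\Vert A\ket{\psi}-B\ket{\psi}\bigr\Vert^{2}=\bra{\psi}(A-B)^{2}\ket{\psi}=\bra{\psi}A\ket{\psi}+\bra{\psi}B\ket{\psi}-2\bra{\psi}AB\ket{\psi},
\end{equation*}
so the task becomes evaluating (or bounding) the three inner products on the right.

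Second, I would identify each term in the language of the assemblage and reduced state. Tracing out appropriately, $\bra{\psi}A\ket{\psi}=\textrm{tr}(\sigma_{a|x})=p(a|x)$, $\bra{\psi}B\ket{\psi}=\textrm{tr}(\tilde{E}_{a|x}\rho_{C})$, and $\bra{\psi}AB\ket{\psi}=\textrm{tr}(\tilde{E}_{a|x}\sigma_{a|x})$. The crucial structural feature of the EPR experiment is that $\tilde{E}_{a|x}$ is precisely the rank-one projector onto the support of $\tilde{\sigma}_{a|x}$, and that $\tilde{\rho}_{C}=\mathbb{I}/2$ so $\textrm{tr}(\tilde{E}_{a|x}\tilde{\rho}_{C})=\tilde{p}(a|x)=\tfrac{1}{2}$ and $\textrm{tr}(\tilde{E}_{a|x}\tilde{\sigma}_{a|x})=\textrm{tr}(\tilde{\sigma}_{a|x})=\tfrac{1}{2}$. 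These identities make the ideal (reference) version of the squared norm vanish exactly, leaving us to control only the deviations induced by the closeness assumptions.

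Third, I would invoke the standard inequality $\lvert\textrm{tr}(EX)\rvert\leq D(\rho,\sigma)$ for any $0\leq E\leq \mathbb{I}$ and $X=\rho-\sigma$ between normalized states (coming from the variational definition of trace distance), and the plain bound $\lvert\textrm{tr}(EY)\rvert\leq \Vert Y\Vert_{1}$ for $Y=\sigma_{a|x}-\tilde{\sigma}_{a|x}$. Combined with the hypotheses $D(\rho_{C},\tilde{\rho}_{C})\leq\epsilon$ and $\Vert\sigma_{a|x}-\tilde{\sigma}_{a|x}\Vert_{1}\leq\epsilon$, these yield
\begin{equation*}
p(a|x)\leq\tfrac{1}{2}+\epsilon,\qquad \textrm{tr}(\tilde{E}_{a|x}\rho_{C})\leq\tfrac{1}{2}+\epsilon,\qquad \textrm{tr}(\tilde{E}_{a|x}\sigma_{a|x})\geq\tfrac{1}{2}-\epsilon.
\end{equation*}
Plugging into the expression from the first step gives $\Vert A\ket{\psi}-B\ket{\psi}\Vert^{2}\leq 4\epsilon$, and taking square roots delivers the claimed $2\sqrt{\epsilon}$.

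The only real subtlety, and the place I would be careful, is using the tight trace-distance bound $\lvert\textrm{tr}(E(\rho_{C}-\tilde{\rho}_{C}))\rvert\leq\epsilon$ (not $2\epsilon$) — using the looser $\Vert\cdot\Vert_{1}$ bound here would give $\sqrt{5\epsilon}$ and miss the constant $2$ stated in the lemma. Everything else is routine algebra once one recognizes that the reference experiment is "matched" in the sense that $\tilde{E}_{a|x}$ projects onto $\tilde{\sigma}_{a|x}$, which is what makes the reference cross term collapse to $\tfrac{1}{2}$.
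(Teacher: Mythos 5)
Your proposal is correct and follows essentially the same route as the paper: expand the squared norm of the difference of the two commuting projectors into $\bra{\psi}A\ket{\psi}+\bra{\psi}B\ket{\psi}-2\bra{\psi}AB\ket{\psi}$, identify each term as $\textrm{tr}(\sigma_{a|x})$, $\textrm{tr}(\tilde{E}_{a|x}\rho_{C})$ and $\textrm{tr}(\tilde{E}_{a|x}\sigma_{a|x})$, and bound each deviation from its reference value $\tfrac{1}{2}$ by $\epsilon$ to get $4\epsilon$ under the square root. Your closing remark about needing the tight bound $\vert\textrm{tr}(\tilde{E}_{a|x}(\rho_{C}-\tilde{\rho}_{C}))\vert\leq D(\rho_{C},\tilde{\rho}_{C})\leq\epsilon$ rather than the looser $2\epsilon$ is exactly the step the paper relies on to land on the stated constant.
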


\begin{proof}
The proof follows from a series of basic observations:
\begin{align}
\Vert\mathbb{I}_{C}\otimes E_{a|x}\ket{\psi}-\tilde{E}_{a|x}\otimes\mathbb{I}_{P}\ket{\psi}\Vert&=\sqrt{\bra{\psi}\mathbb{I}_{C}\otimes E_{a|x}\ket{\psi}+\bra{\psi}\tilde{E}_{a|x}\otimes\mathbb{I}_{P}\ket{\psi}-2\bra{\psi}\tilde{E}_{a|x}\otimes E_{a|x}\ket{\psi}}\nonumber\\
&\leq\sqrt{1+2\epsilon-2\bra{\psi}\tilde{E}_{a|x}\otimes E_{a|x}\ket{\psi}}\nonumber\\
&\leq\sqrt{1+2\epsilon-2(\frac{1}{2}-\epsilon)}\nonumber\\
&=2\sqrt{\epsilon}.\nonumber
\end{align}
The first inequality results from the fact that $\bra{\psi}\mathbb{I}_{C}\otimes E_{a|x}\ket{\psi}=\textrm{tr}_{P}\left(\sigma_{a|x}\right)$ and $\bra{\psi}\tilde{E}_{a|x}\otimes\mathbb{I}_{P}\ket{\psi}=\textrm{tr}_{P}(\tilde{E}_{a|x}\rho_{C})$ and that $\vert\textrm{tr}\left(\sigma_{a|x}-\tilde{\sigma}_{a|x}\right)\vert\leq\epsilon$ and $\vert\textrm{tr}(\tilde{E}_{a|x}\rho_{C}-\tilde{E}_{a|x}\tilde{\rho}_{C})\vert\leq\epsilon$. The second inequality follows from the observation that $\vert\textrm{tr}(\tilde{E}_{a|x}\sigma_{a|x}-\tilde{E}_{a|x}\tilde{\sigma}_{a|x})\vert\leq\epsilon$.
\end{proof}

We are now in a position to prove Thm. \ref{thm1}.
\newline

\noindent
\begin{proof} 
Recall that we are promised that
\begin{align}
D(\rho_{C},\tilde{\rho}_{C})&\leq\epsilon,\nonumber\\
\Vert \sigma_{a|x}-\tilde{\sigma}_{a|x}\Vert_{1}&\leq\epsilon,\nonumber
\end{align}
for all $a$, $x$ where $\tilde{\rho}_{C}=\textrm{tr}_{P}\left(\ket{\tilde{\psi}}\bra{\tilde{\psi}}\right)$. The aim is now to find an explicit isometry $\Phi$ that gives a non-trivial upper bound for the following expression:
\begin{equation}\label{condition}
\Vert\ket{\Phi,Q_{a|x}}\bra{\Phi,Q_{a|x}}-\ket{\mathcal{A}}\bra{\mathcal{A}}\otimes(\mathbb{I}_{C}\otimes \tilde{Q}_{a|x})\ket{\tilde{\psi}}\bra{\tilde{\psi}}(\mathbb{I}_{C}\otimes \tilde{Q}_{a|x})\Vert_{1},
\end{equation}
for $Q_{a|x}\in\{\mathbb{I},E_{a|x}\}$, $\tilde{Q}_{a|x}\in\{\mathbb{I},\tilde{E}_{a|x}\}$ and $\ket{\Phi,Q_{a|x}}$ as defined before. We first focus on the cases where $Q_{a|x}=\mathbb{I}_{P}$ and $\tilde{Q}_{a|x}=\mathbb{I}=\mathbb{I}_{C}$ and use this to argue the more general result.

The isometry that we use is the so-called SWAP isometry that has been used multiple times in the self-testing literature. In this isometry (see Fig. \ref{fig:fig2}) an ancilla qubit is introduced in the state $\ket{+_{P'}}\in\mathcal{H}_{P'}$ where $P'$ denotes the ancilla register on the provider's side in addition to the provider's Hilbert space $\mathcal{H}_{P}$. After introducing the ancilla a unitary operator is applied to both the provider's part of the physical state and the ancilla, i.e. $\ket{\psi}\ket{+_{P'}}\rightarrow(\mathbb{I}_{C}\otimes VHU)\ket{\psi}\ket{+_{P'}}$ where $U=\ket{0_{P'}}\bra{0_{P'}}\otimes\mathbb{I}_{P}+\ket{1_{P'}}\bra{1_{P'}}\otimes Z_{P}$, $V=\ket{0_{P'}}\bra{0_{P'}}\otimes\mathbb{I}_{P}+\ket{1_{P'}}\bra{1_{P'}}\otimes X$ and $H=\ket{+_{P'}}\bra{0_{P'}}+\ket{-_{P'}}\bra{1_{P'}}$ and $Z=2E_{0|0}-\mathbb{I}_{P}$, $X=2E_{0|1}-\mathbb{I}_{P}$ and $X^{2}=Z^{2}=\mathbb{I}_{P}$. After applying this isometry to the physical state $\ket{\psi}$ we obtain the state
\begin{equation*}
\ket{\psi'}=E_{0|0}\ket{\psi}\ket{0_{P'}}+XE_{1|0}\ket{\psi}\ket{1_{P'}}.
\end{equation*}

\begin{figure}
  \centering
    \includegraphics[width=0.4\textwidth]{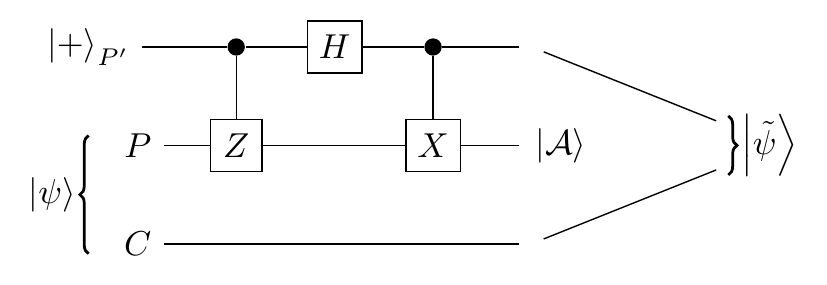}
    \caption{Here the SWAP isometry applied to the provider's system is depicted as a quantum circuit. The notation is explained in the text.\label{fig:fig2}}
\end{figure}

The desired result of this isometry to establish an ebit in the Hilbert space $\mathcal{H}_{C}\otimes\mathcal{H}_{P'}=\mathcal{H}_{C}\otimes\mathcal{H}'_{P}$ in addition to the measurements $\tilde{E}_{a|x}$ acting on the Hilbert space $\mathcal{H}_{P'}$. Therefore we wish to give an upper bound to
\begin{align}\label{tracedist}
&\Vert\left(E_{0|0}\ket{\psi}\ket{0_{P'}}+XE_{1|0}\ket{\psi}\ket{1_{P'}}\right)\left(\bra{\psi}E_{0|0}\bra{0_{P'}}+\bra{\psi}E_{1|0}X\bra{1_{P'}}\right)-\ket{\mathcal{A}}\bra{\mathcal{A}}\otimes\ket{\tilde{\psi}}\bra{\tilde{\psi}}\Vert_{1}.
\end{align}
At this point we can now apply a combination of Lem. \ref{goodlem} and Lem. \ref{niceobs} to bound this norm. Firstly, we observe that by virtue of Lem. \ref{niceobs} we have that
\begin{align}
\Vert\left(E_{0|0}\ket{\psi}\ket{0_{P'}}+XE_{1|0}\ket{\psi}\ket{1_{P'}}\right)-\left(\tilde{E}_{0|0}\otimes\mathbb{I}_{P}\ket{\psi}\ket{0_{P'}}+XE_{1|0}\ket{\psi}\ket{1_{P'}}\right)\Vert &\leq 2\sqrt{\epsilon},\nonumber\\
\Vert\left(\tilde{E}_{0|0}\otimes\mathbb{I}_{P}\ket{\psi}\ket{0_{P'}}+XE_{1|0}\ket{\psi}\ket{1_{P'}}\right)-\left(\tilde{E}_{0|0}\otimes\mathbb{I}_{P}\ket{\psi}\ket{0_{P'}}+\tilde{E}_{1|0}\otimes X\ket{\psi}\ket{1_{P'}}\right)\Vert &\leq 2\sqrt{\epsilon},\nonumber
\end{align}
where, for the sake of brevity, we do not write identities $\mathbb{I}_{C}$, e.g. $E_{0|0}\ket{\psi}\ket{0_{P'}}=\mathbb{I}_{C}\otimes E_{0|0}\ket{\psi}\ket{0_{P'}}$.

We can apply these observations in conjunction with Lem. \ref{goodlem} (and noticing that $\Vert E_{0|0}\ket{\psi}\ket{0_{P'}}+XE_{1|0}\ket{\psi}\ket{1_{P'}}\Vert=1$) to Eq. \ref{tracedist} to obtain
\begin{align}
&\Vert\left(E_{0|0}\ket{\psi}\ket{0_{P'}}+XE_{1|0}\ket{\psi}\ket{1_{P'}}\right)\left(\bra{\psi}E_{0|0}\bra{0_{P'}}+\bra{\psi}E_{1|0}X\bra{1_{P'}}\right)-\ket{\mathcal{A}}\bra{\mathcal{A}}\otimes\ket{\tilde{\psi}}\bra{\tilde{\psi}}\Vert_{1}\nonumber\\
&\leq 2\sqrt{\epsilon}+\Vert\left(\tilde{E}_{0|0}\otimes\mathbb{I}_{P}\ket{\psi}\ket{0_{P'}}+XE_{1|0}\ket{\psi}\ket{1_{P'}}\right)\left(\bra{\psi}E_{0|0}\bra{0_{P'}}+\bra{\psi}E_{1|0}X\bra{1_{P'}}\right)-\ket{\mathcal{A}}\bra{\mathcal{A}}\otimes\ket{\tilde{\psi}}\bra{\tilde{\psi}}\Vert_{1}\nonumber\\
&\leq 4\sqrt{\epsilon}+\Vert\left(\tilde{E}_{0|0}\otimes\mathbb{I}_{P}\ket{\psi}\ket{0_{P'}}+\tilde{E}_{1|0}\otimes X\ket{\psi}\ket{1_{P'}}\right)\left(\bra{\psi}E_{0|0}\bra{0_{P'}}+\bra{\psi}E_{1|0}X\bra{1_{P'}}\right)-\ket{\mathcal{A}}\bra{\mathcal{A}}\otimes\ket{\tilde{\psi}}\bra{\tilde{\psi}}\Vert_{1}\nonumber.
\end{align}
Since $X=2E_{0|1}-\mathbb{I}_{P}$ and, for the Pauli-$X$ matrix $\tau_x=2\ket{+}\bra{+}-\mathbb{I}$, we obtain the following result that
\begin{align}
\Vert\mathbb{I}_{C}\otimes X\ket{\psi}-\tau_x\otimes\mathbb{I}_{P}\ket{\psi}\Vert&\leq 2\Vert\mathbb{I}_{C}\otimes E_{0|1}\ket{\psi}-\tilde{E}_{0|1}\otimes\mathbb{I}_{P}\ket{\psi}\Vert -\Vert\ket{\psi}-\ket{\psi}\Vert\nonumber\\
&\leq 4\sqrt{\epsilon}.\nonumber
\end{align}
We then obtain 
\begin{align}
&\Vert\left(E_{0|0}\ket{\psi}\ket{0_{P'}}+XE_{1|0}\ket{\psi}\ket{1_{P'}}\right)\left(\bra{\psi}E_{0|0}\bra{0_{P'}}+\bra{\psi}E_{1|0}X\bra{1_{P'}}\right)-\ket{\mathcal{A}}\bra{\mathcal{A}}\otimes\ket{\tilde{\psi}}\bra{\tilde{\psi}}\Vert_{1}\nonumber\\
&\leq 8\sqrt{\epsilon}+\Vert\left(\tilde{E}_{0|0}\otimes{\mathbb{I}_{P}}\ket{\psi}\ket{0_{P'}}+\tilde{E}_{1|0}\tau_{x}\otimes{\mathbb{I}_{P}}\ket{\psi}\ket{1_{P'}}\right)\left(\bra{\psi}E_{0|0}\bra{0_{P'}}+\bra{\psi}E_{1|0}X\bra{1_{P'}}\right)-\ket{\mathcal{A}}\bra{\mathcal{A}}\otimes\ket{\tilde{\psi}}\bra{\tilde{\psi}}\Vert_{1}\nonumber.
\end{align}
We will now apply the same reasoning to $\left(\bra{\psi}E_{0|0}\bra{0_{P'}}+\bra{\psi}E_{1|0}X\bra{1_{P'}}\right)$ but we need the fact that 
\begin{equation*}
\Vert \tilde{E}_{0|0}\otimes{\mathbb{I}_{P}}\ket{\psi}\ket{0_{P'}}+\tilde{E}_{1|0}\tau_{x}\otimes{\mathbb{I}_{P}}\ket{\psi}\ket{1_{P'}}\Vert=\sqrt{2\bra{\psi}\tilde{E}_{0|0}\otimes{\mathbb{I}_{P}}\ket{\psi}}\leq\sqrt{1+2\epsilon}\leq 1+\epsilon,
\end{equation*}
which follows from the condition on the reduced state $\rho_{C}$ and $\tilde{E}_{1|0}\tau_{x}=\tau_{x}\tilde{E}_{0|0}$. Using these observations and Lem. \ref{niceobs} we arrive at
\begin{align}
&\Vert\left(E_{0|0}\ket{\psi}\ket{0_{P'}}+XE_{1|0}\ket{\psi}\ket{1_{P'}}\right)\left(\bra{\psi}E_{0|0}\bra{0_{P'}}+\bra{\psi}E_{1|0}X\bra{1_{P'}}\right)-\ket{\mathcal{A}}\bra{\mathcal{A}}\otimes\ket{\tilde{\psi}}\bra{\tilde{\psi}}\Vert_{1}\nonumber\\
&\leq 16\sqrt{\epsilon}+8\epsilon\sqrt{\epsilon}\nonumber\\
&+\Vert\left(\tilde{E}_{0|0}\otimes\mathbb{I}_{P}\ket{\psi}\ket{0_{P'}}+\tilde{E}_{1|0}\tau_{x}\otimes\mathbb{I}_{P}\ket{\psi}\ket{1_{P'}}\right)\left(\bra{\psi}\tilde{E}_{0|0}\otimes\mathbb{I}_{P}\bra{0_{P'}}+\bra{\psi}\tau_{x}\tilde{E}_{1|0}\otimes\mathbb{I}_{P}\bra{1_{P'}}\right)-\ket{\mathcal{A}}\bra{\mathcal{A}}\otimes\ket{\tilde{\psi}}\bra{\tilde{\psi}}\Vert_{1}\nonumber\\
&=16\sqrt{\epsilon}+8\epsilon\sqrt{\epsilon}+\Vert\left(\langle{0_{C}}\ket{\psi}\ket{0_{C}0_{P'}}
+\langle{0_{C}}\ket{\psi}\ket{1_{C}1_{P'}}\right)\left(\bra{\psi}0_{C}\rangle\bra{0_{C}0_{P'}}+
\bra{\psi}0_{C}\rangle\bra{1_{C}1_{P'}}\right)-\ket{\mathcal{A}}\bra{\mathcal{A}}\otimes\ket{\tilde{\psi}}\bra{\tilde{\psi}}\Vert_{1}\nonumber\\
&=16\sqrt{\epsilon}+8\epsilon\sqrt{\epsilon}+\Vert 2\langle{0_{C}}\ket{\psi}\ket{\tilde{\psi}}\bra{\psi}0_{C}\rangle\bra{\tilde{\psi}}
-\ket{\mathcal{A}}\bra{\mathcal{A}}\otimes\ket{\tilde{\psi}}\bra{\tilde{\psi}}\Vert_{1}\nonumber\\
&\leq 16\sqrt{\epsilon}+8\epsilon\sqrt{\epsilon}+\Vert 2\langle{0_{C}}\ket{\psi}\bra{\psi}0_{C}\rangle-\ket{\mathcal{A}}\bra{\mathcal{A}}\Vert_{1}\nonumber\\
&\leq 16\sqrt{\epsilon}+8\epsilon\sqrt{\epsilon}+2\epsilon,\nonumber
\end{align}
where to obtain the last inequality we chose $\ket{\mathcal{A}}$ to be the pure state that is proportional to $\ket{0_{C}}\langle{0_{C}}\ket{\psi}$, i.e. $\ket{\mathcal{A}}=\beta^{-\frac{1}{2}}\ket{0_{C}}\langle{0_{C}}\ket{\psi}$ where $\beta=\bra{\psi}0_{C}\rangle\langle 0_{C}\ket{\psi}$ thus $\vert\textrm{tr}\left(\vert 0_{C}\rangle\langle 0_{C}\vert\rho_{C}\right)-\textrm{tr}\left(\vert 0_{C}\rangle\langle 0_{C}\vert\tilde{\rho}_{C}\right)\vert\leq\vert\beta-\frac{1}{2}\vert\leq\epsilon$.

We have shown that $D(\ket{\Phi}\bra{\Phi},\ket{\mathcal{A}}\bra{\mathcal{A}}\otimes\ket{\tilde{\psi}}\bra{\tilde{\psi}})\leq 8\sqrt{\epsilon}+4\epsilon\sqrt{\epsilon}+\epsilon$. Now we consider the case of self-testing where measurements are made. That is, establishing an upper bound on the expressions of the form in Eq. \ref{condition} where $Q_{a|x}\neq\mathbb{I}_{P}$ and $\tilde{Q}_{a|x}\neq\mathbb{I}$ and after applying the SWAP isometry described above, the projector acting on the physical state $E_{a|x}\ket{\psi}$ gets mapped to
\begin{equation*}
E_{0|0}E_{a|x}\ket{\psi}\ket{0_{P'}}+XE_{1|0}E_{a|x}\ket{\psi}\ket{1_{P'}}.
\end{equation*}
In the case that $x=0$, utilising the fact that $E_{a|x}E_{a'|x}=\delta^{a}_{a'}E_{a|x}$, for Eq. \ref{condition} we obtain:
\begin{align}
\Vert E_{0|0}\ket{\psi}\bra{\psi}E_{0|0}\otimes\ket{0_{P'}}\bra{0_{P'}}-\frac{1}{2}\ket{\mathcal{A}}\bra{\mathcal{A}}\otimes\ket{0_{C}0_{P'}}\bra{0_{C}0_{P'}}\Vert_{1}&\textrm{ for }a=0,\nonumber\\
\Vert XE_{1|0}\ket{\psi}\bra{\psi}E_{1|0}X\otimes\ket{1_{P'}}\bra{1_{P'}}-\frac{1}{2}\ket{\mathcal{A}}\bra{\mathcal{A}}\otimes\ket{1_{C}1_{P'}}\bra{1_{C}1_{P'}}\Vert_{1}&\textrm{ for }a=1\nonumber.
\end{align}
By using the same reasoning as above we obtain the bounds $4\sqrt{\epsilon}+\epsilon$ and $12\sqrt{\epsilon}+\epsilon$ for the $a=0$ and $a=1$ cases respectively. For the case that $x=1$, more work is required in bounding Eq. \ref{condition}. However, again by repeatedly applying the observation in Lem. \ref{niceobs}, as shown in Appendix \ref{app1c} we obtain the bound of 
\begin{equation}\label{final}
\Vert\ket{\Phi,Q_{a|x}}\bra{\Phi,Q_{a|x}}-\ket{\mathcal{A}}\bra{\mathcal{A}}\otimes(\mathbb{I}_{C}\otimes \tilde{Q}_{a|x})\ket{\tilde{\psi}}\bra{\tilde{\psi}}(\mathbb{I}_{C}\otimes \tilde{Q}_{a|x})\Vert_{1}\leq 24\sqrt{\epsilon}+\epsilon,
\end{equation}
thus concluding the proof.
\end{proof}

%mention that CST follows
Central to the proof of this theorem was Lem. \ref{niceobs}, but it is worth noting that the minimal requirements for proving this lemma were bounds on the probabilities and not necessarily bounds on the elements of the assemblage. We utilised the fact that bounds on the probabilities are obtained from the elements of the assemblage, but if one only bounds the probabilities then our result still follows. We then obtain the following corollary.

\begin{Corollary}\label{corr1}
For the EPR experiment, $f(\epsilon)$-robust correlation-based one-sided self-testing is possible for $f(\epsilon)=24\sqrt{\epsilon}+\epsilon$.
\end{Corollary}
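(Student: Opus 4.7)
The plan is to reduce \ref{corr1} to \ref{thm1} by noting that the full assemblage hypothesis in the proof of \ref{thm1} enters only through Lem. \ref{niceobs}, and that the conclusion of that lemma can be extracted just as well from suitable probability bounds. The strategy is therefore to fix the client's trusted measurement set $\{F_{b|y}\}_{b,y}$ in the CST setting to consist of the four rank-one projectors $\ket{0}\bra{0}$, $\ket{1}\bra{1}$, $\ket{+}\bra{+}$, $\ket{-}\bra{-}$ acting on $\mathcal{H}_{C}$ (i.e. exact copies of the reference projectors $\tilde{E}_{a|x}$ of the EPR experiment, transplanted from $\mathcal{H}'_{P}$ onto the client's qubit), and then read off the three inputs demanded by the proof of Lem. \ref{niceobs} directly from the CST closeness conditions.

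Expanding $\Vert\mathbb{I}_{C}\otimes E_{a|x}\ket{\psi}-\tilde{E}_{a|x}\otimes\mathbb{I}_{P}\ket{\psi}\Vert^{2}$ exactly as in the proof of Lem. \ref{niceobs} yields a linear combination of the three scalars $\textrm{tr}(\sigma_{a|x})$, $\textrm{tr}(\tilde{E}_{a|x}\rho_{C})$ and $\textrm{tr}(\tilde{E}_{a|x}\sigma_{a|x})$. With the above choice of $\{F_{b|y}\}$ these are, respectively, a marginal probability $p(a|x)$, a marginal probability $p(b|y)$ for the appropriate $(b,y)$, and a joint probability $p(a,b|x,y)$ for the appropriate $(a,b,x,y)$. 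Each is within $\epsilon$ of its reference value by the three inequalities in the definition of $f(\epsilon)$-CST, so the same chain of estimates that appeared in Lem. \ref{niceobs} yields $\Vert\mathbb{I}_{C}\otimes E_{a|x}\ket{\psi}-\tilde{E}_{a|x}\otimes\mathbb{I}_{P}\ket{\psi}\Vert\leq 2\sqrt{\epsilon}$ under the CST hypotheses alone.

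Once Lem. \ref{niceobs} is available in this form, the construction of the SWAP isometry given in the proof of \ref{thm1} and every subsequent estimate made there invoke only Lem. \ref{niceobs} and Lem. \ref{goodlem}, neither of which refers to the trace norm on the assemblage. The argument therefore reproduces verbatim the same bound $24\sqrt{\epsilon}+\epsilon$ obtained in equation (\ref{final}), giving $f(\epsilon)=24\sqrt{\epsilon}+\epsilon$ for CST as claimed. The main, and genuinely mild, obstacle is bookkeeping: checking that the four-element client measurement set really supplies all three scalar bounds for every $(a,x)$ appearing in the EPR experiment, and noticing that the marginal closeness $|p(b|y)-\tilde{p}(b|y)|\leq\epsilon$ already controls $\textrm{tr}(\tilde{E}_{a|x}\rho_{C})$ without any separate appeal to the trace distance between $\rho_{C}$ and $\tilde{\rho}_{C}$.
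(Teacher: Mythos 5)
Your proposal is correct and follows essentially the same route as the paper: the authors also obtain the corollary by observing that the only places where the assemblage hypotheses enter the proof of Thm.~\ref{thm1} are through the three scalar quantities in Lem.~\ref{niceobs} (plus the bounds on $\textrm{tr}(\tilde{E}_{a|x}\rho_{C})$ used later), all of which are probabilities of the form $p(a|x)$, $p(b|y)$ and $p(a,b|x,y)$ for the client measuring the four reference projectors, and are therefore controlled directly by the CST closeness conditions. Your explicit bookkeeping of which CST inequality supplies which scalar is, if anything, slightly more careful than the paper's one-paragraph justification.
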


Furthermore, one can also obtain this result using an EPR-steering inequality as we outline in Appendix \ref{app1d} with some minor alterations to the function $f(\epsilon)$. The fact that the function $f(\epsilon)$ in Thm. \ref{thm1} and Cor. \ref{corr1} are the same suggests at the sub-optimality of our analysis, since AST could utilise more information than CST. 

%rephrase this to say something about difficulty of direct comparison

It is now worth commenting on the function $f(\epsilon)$ and contrasting it with results in the standard self-testing literature. In particular, we want to contrast this result with other analytical approaches. This is quite difficult since the measure of closeness to the ideal case is measured in terms of closeness to maximal violation of a Bell inequality and not in terms of elements of an assemblage or individual probabilities. Here we give an indicative comparison between the approach presented here and the current literature. Firstly, McKague, Yang and Scarani developed a means of robust self-testing where if the observed violation of the CHSH inequality is $\epsilon$-close to the maximal violation then the state is $O(\epsilon^{(1/4)})$-close to the ebit \cite{MKYS}. This is a less favourable polynomial than our result which demonstrates $O(\sqrt{\epsilon})$-closeness. On the other hand, the work of Reichardt, Unger and Vazirani \cite{BQC} does demonstrate $O(\sqrt{\epsilon})$-closeness in the state again if $\epsilon$-close to the maximal violation of the CHSH inequality. However, the constant factor in front of the $\sqrt{\epsilon}$ term has been calculated in Ref. \cite{Bancal} to be of the order $10^5$ and our result is several orders of magnitude better even considering the analysis in Appendix \ref{app1d} for a fairer comparison. In various other works \cite{MShi,Cedric,Ivan} more general families of self-testing protocols also demonstrate $O(\sqrt{\epsilon})$-closeness of the physical state to the ebit when the violation is $\epsilon$-far from Tsirelson's bound. We must emphasize that our analysis could definitely be tightened at several stages to lower the constants in $f(\epsilon)$ but EPR-steering already yields an improvement over analytical methods in standard self-testing.
%\textcolor{red}{At the end of the next section we will return to this issue and present some numerical results that will allow to get a better insight in the trade off between different measures of closeness.}

\subsection{Numerical results utilising the SWAP isometry}\label{sec2b}

As demonstrated by the general framework in Refs. \cite{YVBSN} and \cite{Bancal}, numerical methods can be employed to obtain better bounds for self-testing. For reasons that will become clear we will shift focus from AST to CST instead and, in particular, CST based on violation of an EPR-steering inequality. Also, we will not be considering CST in full generality and only seek to establish a bound on the trace distance between the physical and reference states (up to isometries). This will facilitate a direct general comparison with previous works.

We begin by constructing the same SWAP isometry as used in the proof of Thm. \ref{thm1}. As before, it is applied to the physical state $\ket{\psi}$ and again we wish to upper bound the norm in Eq. \ref{tracedist}. Since this is the trace distance between the pure states, $E_{0|0}\ket{\psi}\ket{0_{P'}}+XE_{1|0}\ket{\psi}\ket{1_{P'}}$ and $\ket{\mathcal{A}}\ket{\tilde{\psi}}$, we have that \cite{NC}
\begin{align*}
&\frac{1}{2}\Vert\left(E_{0|0}\ket{\psi}\ket{0_{P'}}+XE_{1|0}\ket{\psi}\ket{1_{P'}}\right)\left(\bra{\psi}E_{0|0}\bra{0_{P'}}+\bra{\psi}E_{1|0}X\bra{1_{P'}}\right)-\ket{\mathcal{A}}\bra{\mathcal{A}}\otimes\ket{\tilde{\psi}}\bra{\tilde{\psi}}\Vert_{1}\leq\sqrt{1-(F^*)^{2}}
\end{align*}
where $F^*=\textrm{max }F$ such that
\begin{align}
F&=\sqrt{\bra{\mathcal{A}}\bra{\tilde{\psi}}\left(E_{0|0}\ket{\psi}\ket{0_{P'}}+XE_{1|0}\ket{\psi}\ket{1_{P'}}\right)\left(\bra{\psi}E_{0|0}\bra{0_{P'}}+\bra{\psi}E_{1|0}X\bra{1_{P'}}\right)\ket{\mathcal{A}}\ket{\tilde{\psi}}}\nonumber\\
&=\frac{1}{\sqrt{2}}\sqrt{\bra{\mathcal{A}}\left(\bra{0_{C}}E_{0|0}\ket{\psi}+\bra{1_{C}}XE_{1|0}\ket{\psi}\right)\left(\bra{\psi}E_{0|0}\ket{0_{C}}+\bra{\psi}E_{1|0}X\ket{1_{C}}\right)\ket{\mathcal{A}}}\nonumber.
\end{align}
Inspired by the work in Refs. \cite{YVBSN} and \cite{Bancal}, instead of bounding the quantity $F$, we wish to bound another quantity $G$ which is the \textit{singlet fidelity}. For $\ket{\tilde{\psi}}=\frac{1}{\sqrt{2}}\left(\ket{0_{C}0_{P'}}+\ket{1_{C}1_{P'}}\right)$, this quantity is defined as
\begin{align}
G&=\bra{\tilde{\psi}}\textrm{tr}_{P}\left[\left(E_{0|0}\ket{\psi}\ket{0_{P'}}+XE_{1|0}\ket{\psi}\ket{1_{P'}}\right)\left(\bra{\psi}E_{0|0}\bra{0_{P'}}+\bra{\psi}E_{1|0}X\bra{1_{P'}}\right)\right]\ket{\tilde{\psi}}\nonumber\\
&=\frac{1}{2}\left(\bra{0_{C}}\sigma_{0|0}\ket{0_{C}}+2\bra{0_{C}}(\sigma_{0|1,0|0}-\sigma_{0|0,0|1,0|0})\ket{1_{C}}+2\bra{1_{C}}(\sigma_{0|0,0|1}-\sigma_{0|0,0|1,0|0})\ket{0_{C}}+\bra{1_{C}}(\rho_{C}-\sigma_{0|0})\ket{1_{C}}\right)\nonumber
\end{align}
such that $\sigma_{0|1,0|0}=\sigma_{0|0,0|1}^{\dagger}=\textrm{tr}_{P}(E_{0|1}E_{0|0}\ket{\psi}\bra{\psi})$ and $\sigma_{0|0,0|1,0|0}=\textrm{tr}_{P}(E_{0|0}E_{0|1}E_{0|0}\ket{\psi}\bra{\psi})$. The above two quantities are related through $(F^*)^{2}\geq 2G-1$ as shown in Ref. \cite{Bancal}. 

The goal is now to give a lower bound to $G$ given constraints on the assemblage. In fact, to facilitate comparison with previous work, we will use the violation of the CHSH inequality to impose these constraints. Every Bell inequality gives an EPR steering inequality when assuming the form of the measurements on the trusted side. If on the client's side we assume the measurements that give the maximal violation of the CHSH inequality for the assemblage generated in the EPR experiment the CHSH expression, denoted by $\textrm{tr} S$, can be written as
\begin{align}
\textrm{tr}S&=\textrm{tr}\frac{1}{\sqrt{2}}\left((\tau_{z}+\tau_{x})(\sigma_{0|0}-\sigma_{1|0})+(\tau_{z}+\tau_{x})(\sigma_{0|1}-\sigma_{1|1})+(\tau_{z}-\tau_{x})(\sigma_{0|0}-\sigma_{1|0})-(\tau_{z}-\tau_{x})(\sigma_{0|1}-\sigma_{1|1})\right) \nonumber\\
&=\textrm{tr}\left(\sqrt{2}\tau_{z}(\sigma_{0|0}-\sigma_{1|0})+\sqrt{2}\tau_{x}(\sigma_{0|1}-\sigma_{1|1})\right) = \textrm{tr}\left(\sqrt{2}\tau_{z}(2\sigma_{0|0} - \rho_C) + \sqrt{2}\tau_{x}(2\sigma_{0|1} - \rho_C)\right) = 2\sqrt{2},\nonumber
\end{align}
where the last bound is Tsirelson's bound. The measurements that the client makes are measurements of the observables in the set $\{{1}/\sqrt{2}(\tau_{z}\pm\tau_{x})\}$. We then have the constraint that $\textrm{tr}S\geq 2\sqrt{2}-\eta$ for a near-maximal violation.

We now want a numerical method of minimising the singlet fidelity $G$ (so as to give a lower bound) such that $\textrm{tr}S\geq 2\sqrt{2}-\eta$. This method is given by the following semi-definite program (SDP):
\begin{align}
\label{eq:sdp}
\textrm{minimize }&\textrm{tr}(M^{T}\Gamma)=G\\ \nonumber
\textrm{subject to: }&\Gamma\geq 0,\\ \nonumber
&\textrm{tr}(N^{T}\Gamma)=\textrm{tr}B\geq 2\sqrt{2}-\eta,
\end{align}
where
\begin{align*}
\Gamma&=\left( \begin{array}{cccc}
\rho_{C} & \sigma_{0|0} & \sigma_{0|1} & \sigma_{0|0,0|1}\\
\sigma_{0|0} & \sigma_{0|0} & \sigma_{0|1,0|0} & \sigma_{0|0,0|1,0|0}\\
\sigma_{0|1} & \sigma_{0|0,0|1} & \sigma_{0|1} & \sigma_{0|0,0|1}\\
\sigma_{0|1,0|0} & \sigma_{0|0,0|1,0|0} & \sigma_{0|1,0|0} & \sigma_{0|0,0|1,0|0}\end{array} \right),&
M&=\frac{1}{2}\left( \begin{array}{cccc}
W & \textbf{0} & \textbf{0} & Y\\
\textbf{0} & \tau_{z} & \textbf{0} & \textbf{0}\\
\textbf{0} & \textbf{0} & \textbf{0} & \textbf{0}\\
Y^{T} & \textbf{0} & \textbf{0} & -2\tau_{x}\end{array} \right),&
N&=2\sqrt{2}\left( \begin{array}{cccc}
\frac{-\tau_x - \tau_z}{2} & \textbf{0} & \textbf{0} & \textbf{0}\\
\textbf{0} & \tau_{z} & \textbf{0} & \textbf{0}\\
\textbf{0} & \textbf{0} & \tau_{x} & \textbf{0}\\
\textbf{0} & \textbf{0} & \textbf{0} & \textbf{0}\end{array} \right),
\end{align*}
such that $W = \left( \begin{smallmatrix} 0&0\\0&1 \end{smallmatrix} \right)$, $Y = \left( \begin{smallmatrix} 0&0\\2&0 \end{smallmatrix} \right)$ and $\textbf{0}$ is a $2$-by-$2$ matrix of all zeroes. We constrain $\Gamma$ in the optimization to be positive semi-definite and not that each sub-matrix of $\Gamma$ corresponding to something like an element of an assemblage is a valid quantum object. It actually turns out that all assemblages that satisfy no-signalling can be realised in quantum theory \cite{Gisin,Hughston}. Discussion of this point is beyond the scope of this paper as all we wish to do is give a lower bound on the value of $G$ therefore just imposing $\Gamma\geq 0$ gives such a bound. 

Before giving an indication of the results of the above SDP, we still need to show that $\Gamma\geq 0$. We do this by showing that $\Gamma$ is a Gramian matrix and all Gramian matrices are positive semi-definite. First observe that entries of $\Gamma$ are of the form $\Gamma_{lm}=\bra{i_{C}}\sigma\ket{j_{C}}$ for $\sigma\in\{\rho_{C},\sigma_{0|0},\sigma_{0|1},\sigma_{0|1,0|0},\sigma_{0|0,0|1},\sigma_{0|0,0|1,0|0}\}$. By cyclicity of the partial trace we can also write $\sigma=\textrm{tr}_{P}(F\ket{\psi}\bra{\psi}G^{\dagger})$ for $F$, $G\in\{\mathbb{I}_{P},E_{0|0},E_{0|1},E_{0|1}E_{0|0}\}$. We now note that
\begin{align*}
\bra{i_{C}}\sigma\ket{j_{C}}&=\sum_{\ket{y}\in\mathcal{H}_{P}}\bra{i_{C}}\bra{y}F\ket{\psi}\bra{\psi}G^{\dagger}\ket{y}\ket{j_{C}}\\
&=\left(\sum_{\ket{y}\in\mathcal{H}_{P}}\bra{i_{C}}\bra{y}F\ket{\psi}\bra{y}\right)\left(\sum_{\ket{y'}\in\mathcal{H}_{P}}\bra{\psi}G^{\dagger}\ket{y'}\ket{j_{C}}\ket{y'}\right)\\
&=\sum_{y}\alpha_{y}\bra{y}\sum_{y'}\alpha_{y'}^{*}\ket{y'}\\
&=\langle{{u}}\ket{v}
\end{align*}
where $\{\ket{y}\}$ is an orthonormal basis in $\mathcal{H}_{P}$ such that $\langle y'\ket{y}=\delta^{y}_{y'}$ and $\alpha_{y}=\bra{i_{C}}\bra{y}F\ket{\psi}$ is some scalar. Since the elements of $\Gamma$ are all the inner product of vectors associated with a row and column, $\Gamma=V^{\dagger}V$ where $V$ has column vectors associated with the vectors $v$. Therefore, $\Gamma$ is Gramian. This then makes the above optimization problem a completely valid problem for lower bounding $G$. We further note that matrix $\Gamma$ represents the EPR-steering analogue of the moment matrix in the Navascu\'{e}s- Pironio-Ac\'{i}n (NPA) hierarchy \cite{NPA} which is useful for approximating the set of quantum correlations \footnote{In principle, we could mimic the NPA hierarchy by constructing matrices $\Gamma$ with elements corresponding to assemblage elements with longer sequences of measurements on the provider's side. However, due to the work in Ref. \cite{Gonzalo}, having the client's system be two-dimensional already essentially puts us in the first level of the hierarchy without the need to go higher.}.

In Fig. \ref{fig:fig3} we plot the lower bound on $G$ achieved through this method and then compare it to the value obtained through the method of Bancal \textit{et al} in Ref. \cite{Bancal}. In both cases the violation of the CHSH inequality is lower-bounded by $2\sqrt{2}-\eta$, and we clearly see that the lower-bound is more favourable for our optimization through EPR-steering as compared to full device-independence. For the case of EPR-steering we observed that the plot can be lower-bounded by the function $1-\eta/\sqrt{2}$ whereas the plot for device-independence is lower-bounded by $1-5\eta/4$. Respectively, these functions give an upper bound on $D(\ket{\Phi}\bra{\Phi},\ket{\mathcal{A}}\bra{\mathcal{A}}\otimes\ket{\tilde{\psi}}\bra{\tilde{\psi}})$ of $2^{\frac{1}{4}}\sqrt{\eta}\leq 1.19\sqrt{\eta}$ and $\sqrt{10}/2\sqrt{\eta}\leq1.59\sqrt{\eta}$. The difference between these two approaches is not as dramatic as the difference in the analytical approaches. However, these results just highlight that the analytical approaches are quite sub-optimal for both EPR-steering and device-independent self-testing. 

\begin{figure}
  \centering
    \includegraphics[width=0.45\textwidth]{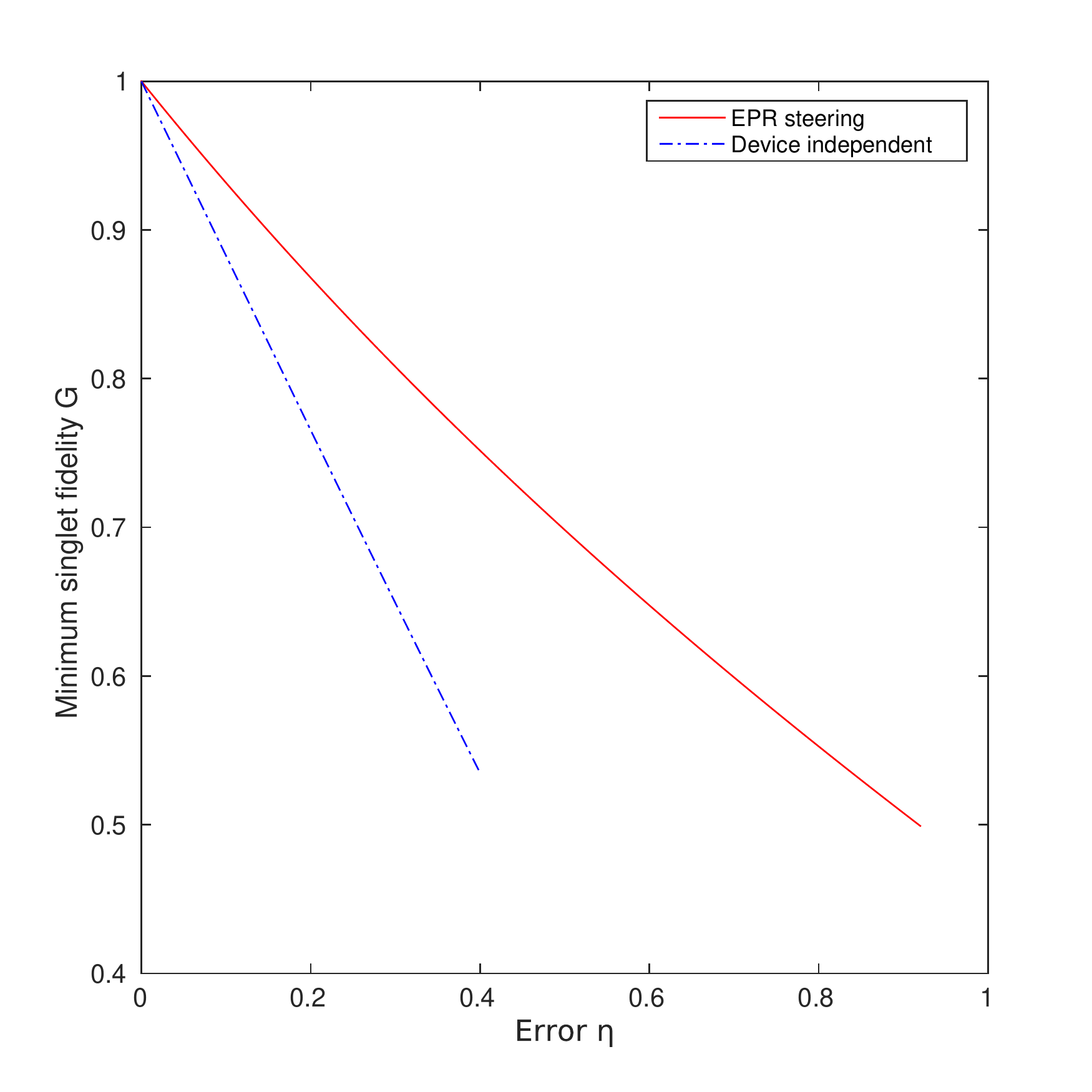}
    \caption{A graph numerically comparing self-testing of the ebit in a device-independent manner to our method based on EPR-steering. The error $\eta$ is the distance from the maximal violation of the CHSH inequality.\label{fig:fig3}}
\end{figure}

\subsection{Optimality of the SWAP isometry}\label{sec2c}

Both the analytical and numerical approaches have utilised the same SWAP isometry. While constructing this isometry demonstrates in a clear and simple manner that self-testing is possible, it is natural to ask if there may be more useful isometries that give a different error scaling for our particular scenario? In particular, can we do better than the $\sqrt{\epsilon}$ in the function $f(\epsilon)$ for $f(\epsilon)$-AST? As we have already shown in Sec. \ref{sec1}, in general this is not possible but the example demonstrating this is somewhat contrived. That is, we are trying to self-test a two-qubit state but assume that the Hilbert space of the client is three-dimensional. We wish to ask if $O(\epsilon)$-AST is possible in the particular example of the EPR experiment? In this section we will show that this is not possible and the best we can hope for is $O(\sqrt{\epsilon})$-AST which we have already established is possible. 

As a side note, in Appendix \ref{app1e} we show that the trace distance between the physical and reference states in the EPR experiment can be $O(\epsilon)$ for some isometries. We emphasize that this trace distance between physical and reference states (condition given in the first line of Eq. \ref{eq:defAST}) only amounts to part of the criteria for AST. The other part of the criteria (the second line of Eq. \ref{eq:defAST}) rules out many isometries that might give the optimal trace distance between physical and reference states only. With this in mind we want to bound the expression in Eq. \ref{condition} for all possible isometries given $\epsilon$-closeness between the elements of the physical and reference assemblages. In particular, we give an example of a physical experiment where $\epsilon$-closeness for the assemblages is satisfied but for all isometries, the smallest value of Eq. \ref{condition} is $O(\sqrt{\epsilon})$. 
\begin{Example}
The physical state is
\begin{equation*}
\ket{\psi}=\frac{1}{\sqrt{2}}\left(\sqrt{1-\epsilon}\ket{0_{C}0_{P}}+\sqrt{\epsilon}\ket{1_{C}1_{P}}\right)\ket{0_{P'}}+\frac{1}{\sqrt{2}}\left(\sqrt{\epsilon}\ket{0_{C}0_{P}}+\sqrt{1-\epsilon}\ket{1_{C}1_{P}}\right)\ket{1_{P'}}
\end{equation*}
where $P$ and $P'$ denote two qubits that the provider has in their possession, thus $\rho_{C}=\frac{1}{2}\mathbb{I}_{C}$. The physical measurements are $E_{0|0}=\mathbb{I}_{P}\otimes\ket{0_{P'}}\bra{0_{P'}}$, $E_{1|0}=\mathbb{I}_{P}\otimes\ket{1_{P'}}\bra{1_{P'}}$, $E_{0|1}=\ket{+_{P}}\bra{+_{P}}\otimes\ket{+_{P'}}\bra{+_{P'}}+\ket{-_{P}}\bra{-_{P}}\otimes\ket{-_{P'}}\bra{-_{P'}}$ and $E_{1|1}=\ket{+_{P}}\bra{+_{P}}\otimes\ket{-_{P'}}\bra{-_{P'}}+\ket{-_{P}}\bra{-_{P}}\otimes\ket{+_{P'}}\bra{+_{P'}}$. These physical measurements on the state produce the following assemblage elements:
\begin{align}
\sigma_{0|0}&=\frac{(1-\epsilon)}{2}\ket{0_{C}}\bra{0_{C}}+\frac{\epsilon}{2}\ket{1_{C}}\bra{1_{C}},&
\sigma_{1|0}&=\frac{(1-\epsilon)}{2}\ket{1_{C}}\bra{1_{C}}+\frac{\epsilon}{2}\ket{0_{C}}\bra{0_{C}},\nonumber\\
\sigma_{0|1}&=\frac{1}{2}\ket{+_{C}}\bra{+_{C}},&
\sigma_{1|1}&=\frac{1}{2}\ket{-_{C}}\bra{-_{C}}.\nonumber
\end{align}
We see then that $D(\rho_{C},\tilde{\rho}_{C})=0$ and $\Vert\sigma_{a|x}-\tilde{\sigma}_{a|x}\Vert\leq\epsilon$ for all $a$, $x$.

We now show that $\Vert\ket{\Phi,E_{0|0}}\bra{\Phi,E_{0|0}}-\ket{\mathcal{A}}\bra{\mathcal{A}}\otimes\tilde{E}_{0|0}\ket{\tilde{\psi}}\bra{\tilde{\psi}} \tilde{E}_{0|0}\Vert_{1}\geq\sqrt{\epsilon}$ for all possible isometries $\Phi$. By considering all possible isometries we have
\begin{equation*}
\ket{\Phi,E_{0|0}}=UE_{0|0}\ket{\psi}\ket{\hat{0}}=\frac{1}{\sqrt{2}}U\left(\sqrt{1-\epsilon}\ket{0_{C}0_{P}}+\sqrt{\epsilon}\ket{1_{C}1_{P}}\right)\ket{0_{P'}}\ket{\hat{0}}=\frac{1}{\sqrt{2}}\ket{\epsilon},
\end{equation*}
for $\ket{\epsilon}=U\left(\sqrt{1-\epsilon}\ket{0_{C}0_{P}}+\sqrt{\epsilon}\ket{1_{C}1_{P}}\right)\ket{0_{P'}}\ket{\hat{0}}$ and $U$ being a unitary applied jointly to the provider's qubits and the ancillae $\ket{\hat{0}}$. This then allows us to observe that
\begin{equation*}
\Vert\ket{\Phi,E_{0|0}}\bra{\Phi,E_{0|0}}-\ket{\mathcal{A}}\bra{\mathcal{A}}\otimes\tilde{E}_{0|0}\ket{\tilde{\psi}}\bra{\tilde{\psi}}\tilde{E}_{0|0}\Vert_{1}=D(\ket{\epsilon}\bra{\epsilon},\ket{\mathcal{A}}\bra{\mathcal{A}}\otimes\ket{00}\bra{00})=\sqrt{1-\vert\langle{\epsilon}\ket{\mathcal{A}}\ket{00}\vert^{2}}.
\end{equation*}
We see that $\vert\langle{\epsilon}\ket{\mathcal{A}}\ket{00}\vert^{2}=(1-\epsilon)\vert\bra{\mathcal{A}}\bra{0}U\ket{0_{P}}\ket{\hat{0}}\vert^{2}$ which achieves the maximal value of $(1-\epsilon)$. Therefore $\Vert\ket{\Phi,E_{0|0}}\bra{\Phi,E_{0|0}}-\ket{\mathcal{A}}\bra{\mathcal{A}}\otimes\tilde{E}_{0|0}\ket{\tilde{\psi}}\bra{\tilde{\psi}}\tilde{E}_{0|0}\Vert_{1}\geq\sqrt{\epsilon}$ for all possible isometries $\Phi$. 
\end{Example}
This example demonstrates that $O(\epsilon)$-AST is impossible for the EPR experiment and our analytical results are essentially optimal (up to constants).

\section{Self-testing multi-partite states}\label{sec3}

So far all the work presented thus far has been presented within a bipartite format both in terms of the client-provider scenario but also the reference state's Hilbert space being the tensor product of two Hilbert spaces. Due to their utility in various tasks, the self-testing of multi-partite quantum states is also desirable. Within the device-independent self-testing literature there have already been many developments along this line of research (see, e.g. Refs. \cite{McKague, Wu}). In this section we give a brief indication of how to generalise our set-up to the consideration of such states. In Sec. \ref{sec3a} we will discuss the self-testing of tri-partite states and give initial numerical results demonstrating the richness of this scenario. We will briefly sketch in Sec. \ref{sec3b} how EPR-steering could prove useful in establishing a tensor product structure within the provider's Hilbert space. 

\subsection{Self-testing the GHZ state}\label{sec3a}

Already for three parties, how to modify the client-provider set-up opens up new and interesting possibilities. For example, the simplest modification is to have the new, third party be a trusted part of the client's laboratory; the total Hilbert space of the client $\mathcal{H}_{C}$ is now the tensor product of the two Hilbert spaces associated with these two parties. The next possible modification, as shown in Fig. \ref{fig:fig4}, is to have a second untrusted party that after receiving their share of the physical state does not communicate with the initial provider: they only communicate with the client. This restriction establishes a tensor product structure between the two untrusted parties which is useful. 

\begin{figure}
  \centering
    \includegraphics[width=0.33\textwidth]{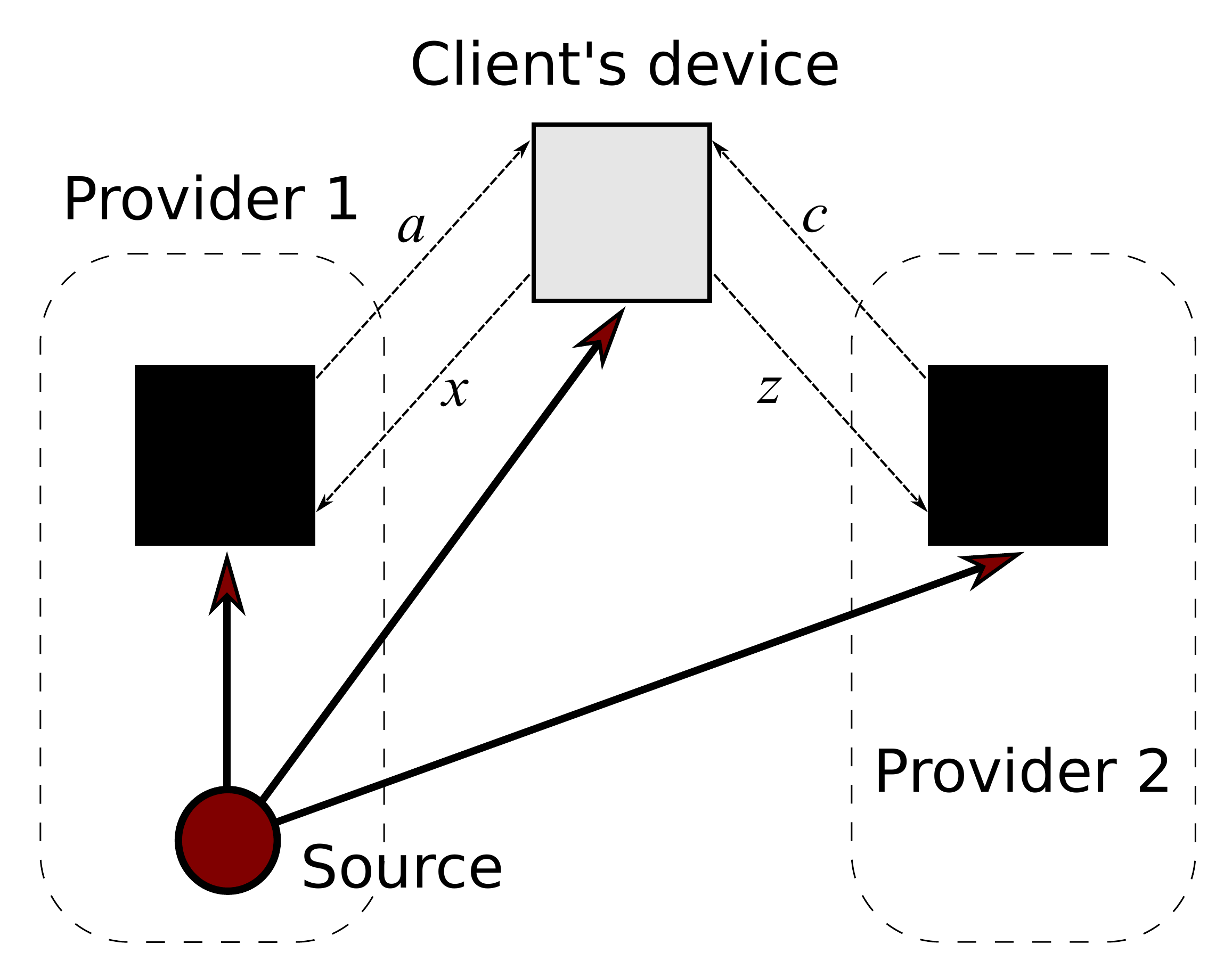}
    \caption{Here we depict the tripartite set-up with three parties where only one is the client, called the $1$-trusted setting in the text. There are two non-communicating providers and we assume without loss of generality that one of them generates a quantum state and sends one part to the client and another to the other provider. The client may communicate with each provider individually and ask them to perform measurements.\label{fig:fig4}}
\end{figure}

To illustrate the interesting differences between the bipartite and tri-partite cases, we look at the example of self-testing the Greenberger-Horne-Zeilinger (GHZ) state $\ket{\tilde{\psi}}=1/\sqrt{2}\left(\ket{\Psi}\ket{+_{3}}+\ket{\Psi'}\ket{-_{3}}\right)$ where $\ket{\Psi}=1/\sqrt{2}(\ket{0_{1}0_{2}}-\ket{1_{1}1_{2}})$ and $\ket{\Psi'}=1/{\sqrt{2}}(\ket{0_{1}1_{2}}+\ket{1_{1}0_{2}})$ with subscripts denoting the number of the qubit. In the scenario with two trusted parties (that together form the client), a qubit is sent from the provider to each of these parties (say, qubits $1$ and $2$ are sent); we will call this scenario the $2$\textit{-trusted setting}. In the other scenario with two non-communicating untrusted providers, a qubit (say, qubit $1$) is sent to the client; we will call this scenario the $1$\textit{-trusted setting}. These different scenarios correspond to different types of multipartite EPR-steering introduced in Ref. \cite{DaniPaul}.

We now describe the reference experiments for both settings for the state $\ket{\tilde{\psi}}$. In the case of the $2$-trusted setting, as in the EPR experiment, the provider claims to make measurements $\tilde{E}_{j|0}=\ket{j}\bra{j}$ for $j\in\{0,1\}$ as well as $\tilde{E}_{0|1}=\ket{+}\bra{+}$ and $\tilde{E}_{1|1}=\ket{-}\bra{-}$. The assemblage for the two trusted parties has elements
\begin{align}
\tilde{\sigma}_{0|0}&=\frac{1}{4}(\ket{\Psi}+\ket{\Psi'})(\bra{\Psi}+\bra{\Psi'}),&
\tilde{\sigma}_{1|0}&=\frac{1}{4}(\ket{\Psi}-\ket{\Psi'})(\bra{\Psi}-\bra{\Psi'}),\nonumber\\
\tilde{\sigma}_{0|1}&=\frac{1}{2}\ket{\Psi}\bra{\Psi},&
\tilde{\sigma}_{1|1}&=\frac{1}{2}\ket{\Psi'}\bra{\Psi'}.\nonumber
\end{align}
For the $1$-trusted setting, in addition to the provider claiming to making the above measurements, the second untrusted party, or second provider claims also to make the same measurements, which we denote by $\tilde{E}_{c|z}$ for $c$, $z\in\{0,1\}$. The assemblage will be $\{\tilde{\sigma}_{a,c|x,z}\}_{a,c,x,z}$ where each element is $\tilde{\sigma}_{a,c|x,z}=\textrm{tr}_{P}(\mathbb{I}_{C}\otimes \tilde{E}_{c|z}\otimes \tilde{E}_{a|x}\ket{\tilde{\psi}}\bra{\tilde{\psi}})$. The assemblage for the one trusted party will have $16$ elements but for the sake of brevity we will not write out the elements.

We then wish to self-test this reference experiment when the elements of the physical assemblage are close to the elements of the ideal, reference experiment. Instead of doing this, we will mimic the numerical approach in Sec. \ref{sec2b} by considering the GHZ-Mermin inequality \cite{GHZMermin} adapted to the $1$-trusted and $2$-trusted scenarios. Utilising the notation of $\tau_{x}$ and $\tau_{z}$ for the Pauli-$X$ and Pauli-$Z$ matrices respectively, for the $2$-trusted and $1$-trusted settings, the inequalities respectively are:
\begin{align*}
\textrm{tr}B_{2}=&2\textrm{tr}\left((\tau_{z}\otimes\tau_{z})(2\sigma_{0|1}-\rho_C)+(\tau_{x}\otimes\tau_{z})(2\sigma_{0|0}-\rho_c)+(\tau_{z}\otimes\tau_{x})(2\sigma_{0|0}-\rho_C)-(\tau_{x}\otimes\tau_{x})(2\sigma_{0|1}-\rho_C)\right)\leq 2,\\
\textrm{tr}B_{1}=&2\textrm{tr}\left(\tau_{z}(\sigma_{00|01}-\sigma_{01|01}-\sigma_{10|01}+\sigma_{11|01})+\tau_{x}(\sigma_{00|00}+\sigma_{11|00}-\sigma_{01|00}-\sigma_{10|00})\right) + \\ & 2\textrm{tr}\left(\tau_{z}(\sigma_{00|10}+\sigma_{11|10}-\sigma_{01|10}-\sigma_{10|10})-\tau_{x}(\sigma_{00|11}+\sigma_{11|11}-\sigma_{01|11}-\sigma_{10|11})\right)\leq 2.
\end{align*}
The maximal quantum violation of these inequalities is $4$. We now aim to carry out self-testing if the physical experiment achieves a violation of $4-\eta$. For the untrusted parties, we implement the SWAP isometry to each of their systems as outlined in Sec. \ref{sec2a}. For the $2$-trusted setting, the physical state $\ket{\psi}$ gets mapped to $\ket{\psi'}=E_{0|0}\ket{\psi}\ket{0_P'}+XE_{1|0}\ket{\psi}\ket{1_P'}$. In the $1$-trusted setting, the physical state $\ket{\psi}$ gets mapped to
\begin{equation*}
\ket{\psi''}=E_{0|0}F_{0|0}\ket{\psi}\ket{0_{P'}}\ket{0_{P''}}+XE_{1|0}F_{0|0}\ket{\psi}\ket{1_{P'}}\ket{0_{P''}}+E_{0|0}X'F_{1|0}\ket{\psi}\ket{0_{P'}}\ket{1_{P''}}+XE_{1|0}X'F_{1|0}\ket{\psi}\ket{1_{P'}}\ket{1_{P''}}
\end{equation*}
where $F_{c|z}$ is the physical measurement made by the second untrusted party, $X'=2F_{0|1}-\mathbb{I}$ and $P'$ denotes the ancilla qubit introduced for one party and $P''$ for the other party.

Our figure of merit for closeness between the physical and reference states is the \textit{GHZ fidelity} which for the $2$-trusted and $1$-trusted settings is $G_{2}$ and $G_{1}$ respectively where
\begin{align*}
G_{2}&=\bra{\tilde{\psi}}\textrm{tr}_{P}\left(\ket{\psi'}\bra{\psi'}\right)\ket{\tilde{\psi}},\nonumber\\
G_{1}&=\bra{\tilde{\psi}}\textrm{tr}_{P}\left(\ket{\psi''}\bra{\psi''}\right)\ket{\tilde{\psi}},
\end{align*}
where in both cases we trace out the provider's (providers') Hilbert space(s) $\mathcal{H}_{P}$. Now we minimize $G_{2}$ while $\textrm{tr}B_{2}\geq 4-\eta$ and minimize $G_{1}$ such that $\textrm{tr}B_{2}\geq 4-\eta$. These problems again can be lower-bounded by an SDP and in Fig. \ref{fig:fig5} we give numerical values obtained with these minimization problems. This case is numerically more expensive than the simple self-testing of the EPR experiment and for tackling it we used the SDP procedures described in Ref. \cite{Wittek}. We also compare our results to those obtained in the device-independent setting where all three parties are not trusted but the violation of the GHZ-Mermin inequality is $4-\eta$. We see that the GHZ fidelity increases when we trust more parties. Interestingly, we can see that the curve for $1$-trusted scenario is obviously closer to the curve of $2$-trusted scenario than to the device-independent one. This may hint that multi-partite EPR-steering behaves quite differently to quantum non-locality. However, to draw this conclusion from self-testing one would have to pursue more rigorous research, since we have only obtained numerical lower bounds on the GHZ fidelity using only one specific isometry.

\begin{figure}
  \centering
    \includegraphics[width=0.45\textwidth]{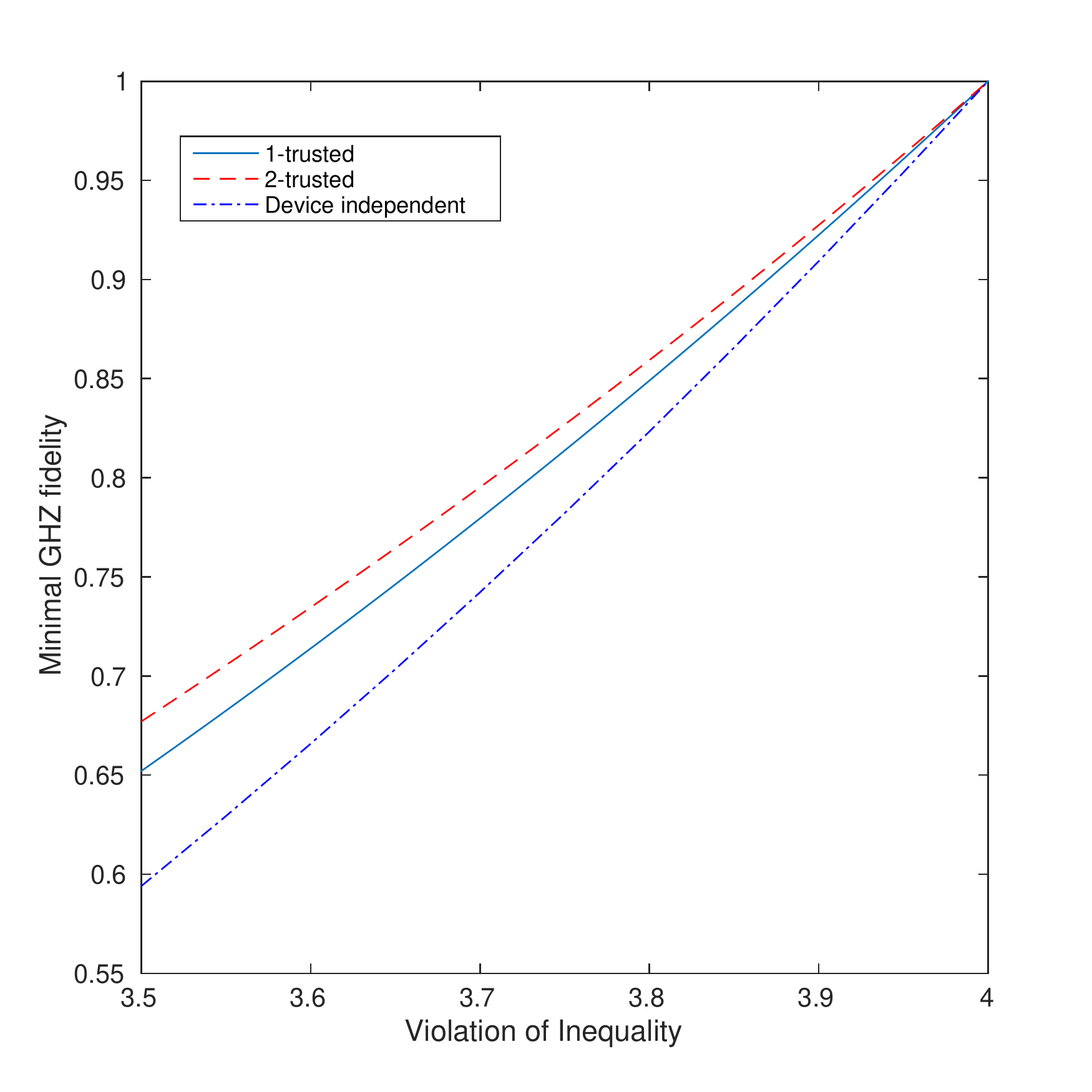}
    \caption{A graph numerically comparing the minimum GHZ fidelity for a given violation of the GHZ-Mermin inequality for different levels of trust in the devices. We observe that the line for the $1$-trusted setting is closer to the $2$-trusted setting than device-independence. In future work we will aim to understand if there is fundamental reason for this.\label{fig:fig5}}
\end{figure}

\subsection{Establishing a tensor product structure}\label{sec3b}

The previous section hints at what might be the most useful aspect of self-testing through EPR-steering: establishing a tensor product structure in the provider's Hilbert space. In the work of Reichardt, Unger and Vazirani, a method is presented for self-testing many copies of the ebit between two untrusted parties \cite{BQC}. This testing is achieved through measurements made in sequence. Recent work has established the same feat but now with measurements being made at the same time, thus giving a more general result \cite{Parallel}. The difficulty in establishing that the two untrusted parties have multiple copies of the ebit is to establish that (up to isometries) the Hilbert spaces of the parties decompose as a tensor product of several $2$-dimensional Hilbert spaces: in each sub-space there is one-half of an ebit.

We now remark that EPR-steering offers a useful simplification in achieving the same task of identifying a tensor product structure. Note that in the trusted laboratory a tensor product structure is known: the client knows they have, say, two qubits. If the assemblage for each qubit is close to the ideal case of being one half of an ebit, then we may use Lem. \ref{niceobs} to ``transfer" the physical operations on the untrusted side to one of the qubits on the trusted side. We also note that this observation forms part of the basis of the work presented in Ref. \cite{Edinburgh}, in the context of verification of quantum computation.

To be more exact, we now have the client's Hilbert space being constructed from a tensor product of $N$ two-dimensional Hilbert spaces, i.e. $\mathcal{H}_{C}=\bigotimes_{i=1}^{N}\mathcal{H}_{C_{i}}$ where $\mathcal{H}_{C_{i}}=\mathbb{C}^{2}$. We now have a modified form of the EPR experiment with the reference state being $\vert\tilde{\psi}\rangle=\bigotimes_{i=1}^{N}\vert\tilde{\psi}_{i}\rangle\in\bigotimes_{i=1}^{N}\mathcal{H}_{P_{i}}\otimes\mathcal{H}_{C_{i}}$ for each $\vert\tilde{\psi}_{i}\rangle=\frac{1}{\sqrt{2}}(\vert 00\rangle+\vert 11\rangle)\in\mathcal{H}_{P_{i}}\otimes\mathcal{H}_{C_{i}}$. That is, in the reference experiment, the provider's Hilbert space has a tensor product structure. For each Hilbert space $\mathcal{H}_{P_{i}}$, there is a projective measurement with projectors $\tilde{E}_{a_{i}|x_{i}}$ acting on that space where $a_{i}$, $x_{i}\in\{0,1\}$ and these projectors are the qubit projectors in the EPR experiment. Therefore, the total reference projector is of the form $\bigotimes_{i=1}^{N}\tilde{E}_{a_{i}|x_{i}}$ which act on the Hilbert space $\bigotimes_{i=1}^{N}\mathcal{H}_{P_{i}}$. In this case, the measurement choices and outcomes are bit-strings $\textbf{x}:=(x_{1},x_{2},...,x_{N})$ and $\textbf{a}:=(a_{1},a_{2},...,a_{N})$ respectively. We call this reference experiment the \textit{N-pair EPR experiment} and we are now in a position to generalise Lem. \ref{niceobs}.

\begin{lemma}\label{Nniceobs}
For the N-pair EPR experiment, if for all $i$, $\Vert\sigma_{\textbf{a}|\textbf{x}}-\tilde{\sigma}_{\textbf{a}|\textbf{x}}\Vert_{1}\leq\epsilon$ and $D(\rho_{C},\tilde{\rho}_{C})\leq\epsilon$ where $\tilde{\sigma}_{\textbf{a}|\textbf{x}}=\bigotimes_{i=1}^{N}\tilde{\sigma}_{a_{i}|x_{i}}$ and $\tilde{\rho}_{C}=\bigotimes_{i=1}^{N}\frac{\mathbb{I_{C}}}{2}$ then
\begin{equation}
\Vert\mathbb{I}_{C}\otimes E_{\textbf{a}|\textbf{x}}\ket{\psi}-\bigotimes_{i=1}^{N}\tilde{E}_{a_{i}|x_{i}}\otimes\mathbb{I}_{P}\ket{\psi}\Vert\leq 2\sqrt{\epsilon}.
\end{equation}
\end{lemma}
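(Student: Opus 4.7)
The plan is to mimic the proof of Lemma \ref{niceobs} essentially verbatim, with tensor products in place of the single-qubit operators. First I would expand the squared norm as
\begin{align*}
\Vert\mathbb{I}_{C}\otimes E_{\textbf{a}|\textbf{x}}\ket{\psi}-\tilde{E}_{\textbf{a}|\textbf{x}}\otimes\mathbb{I}_{P}\ket{\psi}\Vert^{2}
&= \bra{\psi}\mathbb{I}_{C}\otimes E_{\textbf{a}|\textbf{x}}\ket{\psi} + \bra{\psi}\tilde{E}_{\textbf{a}|\textbf{x}}\otimes\mathbb{I}_{P}\ket{\psi} \\
&\quad - 2\bra{\psi}\tilde{E}_{\textbf{a}|\textbf{x}}\otimes E_{\textbf{a}|\textbf{x}}\ket{\psi},
\end{align*}
where I abbreviate $\tilde{E}_{\textbf{a}|\textbf{x}}=\bigotimes_{i=1}^{N}\tilde{E}_{a_{i}|x_{i}}$. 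The three terms equal $\textrm{tr}(\sigma_{\textbf{a}|\textbf{x}})$, $\textrm{tr}(\tilde{E}_{\textbf{a}|\textbf{x}}\rho_{C})$ and $\textrm{tr}(\tilde{E}_{\textbf{a}|\textbf{x}}\sigma_{\textbf{a}|\textbf{x}})$ respectively, exactly as in the $N=1$ case.

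Next I would compute the reference values. Because $\tilde{\rho}_{C}=\bigotimes_{i}(\mathbb{I}/2)$ and $\tilde{\sigma}_{\textbf{a}|\textbf{x}}=\bigotimes_{i}\tilde{\sigma}_{a_{i}|x_{i}}$ with each $\tilde{\sigma}_{a_{i}|x_{i}}=\tfrac{1}{2}\ket{a_{i,x_{i}}}\bra{a_{i,x_{i}}}$ supported on the eigenvector of $\tilde{E}_{a_{i}|x_{i}}$, the tensor product factorizes and gives
\[
\textrm{tr}(\tilde{\sigma}_{\textbf{a}|\textbf{x}}) = \textrm{tr}(\tilde{E}_{\textbf{a}|\textbf{x}}\tilde{\rho}_{C}) = \textrm{tr}(\tilde{E}_{\textbf{a}|\textbf{x}}\tilde{\sigma}_{\textbf{a}|\textbf{x}}) = \frac{1}{2^{N}}.
\]
The key observation that makes the single-pair argument go through is that $\tilde{E}_{\textbf{a}|\textbf{x}}$, being a tensor product of rank-one projectors, still has operator norm equal to one, so H\"{o}lder's inequality yields $\vert\textrm{tr}(\tilde{E}_{\textbf{a}|\textbf{x}}A)\vert\leq\Vert A\Vert_{1}$ for any operator $A$, just as for a single projector.

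Then I would invoke the hypotheses $\Vert\sigma_{\textbf{a}|\textbf{x}}-\tilde{\sigma}_{\textbf{a}|\textbf{x}}\Vert_{1}\leq\epsilon$ and $D(\rho_{C},\tilde{\rho}_{C})\leq\epsilon$ together with the above H\"{o}lder bound to obtain $\textrm{tr}(\sigma_{\textbf{a}|\textbf{x}})\leq 2^{-N}+\epsilon$, $\textrm{tr}(\tilde{E}_{\textbf{a}|\textbf{x}}\rho_{C})\leq 2^{-N}+\epsilon$, and $\textrm{tr}(\tilde{E}_{\textbf{a}|\textbf{x}}\sigma_{\textbf{a}|\textbf{x}})\geq 2^{-N}-\epsilon$. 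Substituting into the expansion gives an upper bound of $4\epsilon$ on the squared norm, and taking the square root finishes the argument.

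The only real subtlety is ensuring the bound on the cross-term $\textrm{tr}(\tilde{E}_{\textbf{a}|\textbf{x}}\sigma_{\textbf{a}|\textbf{x}})$, since a priori one might worry that the tensor-product structure on the provider side inflates the operator norm; I expect this to be the main point worth stating clearly, but it is immediate from $\tilde{E}_{\textbf{a}|\textbf{x}}$ being a projector. Everything else is a direct copy of the single-pair proof, and in particular no cross-terms between different indices $i$ appear because $\tilde{E}_{\textbf{a}|\textbf{x}}$ is treated as a single operator throughout.
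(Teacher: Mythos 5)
Your proposal is correct and is exactly the argument the paper intends: the paper omits the proof of this lemma, stating only that it is ``almost identical'' to that of Lem.~\ref{niceobs}, and your adaptation (expanding the squared norm, noting that all three reference quantities equal $2^{-N}$ so they cancel, and bounding each deviation by $\epsilon$ using that $\tilde{E}_{\textbf{a}|\textbf{x}}$ is still a projector of operator norm one) is precisely that proof, yielding $\sqrt{4\epsilon}=2\sqrt{\epsilon}$.
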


The proof of this lemma is almost identical to the proof of Lem. \ref{niceobs} and so we will leave it out from our discussion. A nice relaxation of the conditions of the above lemma is to insist that each observed element of an assemblage $\sigma_{a_{i}|x_{i}}$ is $\epsilon$-close to $\tilde{\sigma}_{a_{i}|x_{i}}$ and still recover a similar result. This requires a little bit more work since we have not been specific in how we model the provider's measurements. For example, we have not stipulated whether the probability distribution $p(\textbf{a}|\textbf{x})=\textrm{tr}(\sigma_{\textbf{a}|\textbf{x}})$ satisfies the no-signalling principle. Furthermore, even if these probabilities satisfy this principle, it does not immediately enforce a constraint on the behaviour of the measurements. For the sake of brevity we will not address this issue in this work. It remains to point out that Lem. \ref{Nniceobs} can be used to develop a result for self-testing (cf Ref. \cite{Edinburgh}).

\section{Discussion}\label{sec4}

In our work we have explored the possibilities of self-testing quantum states and measurements based on bipartite (and multi-partite) EPR-steering. We have shown that the framework allows for a broad range of tools for performing self-testing. One can use state tomography on part of the state and use this information to get more useful analytical methods. Or, indeed, one only needs to use the probabilities of outcomes for certain fixed (and known) measurements. Furthermore, self-testing can be based solely on the near-maximal violation of an EPR-steering inequality. We compared these approaches to the standard device-independent approach and demonstrated that EPR-steering simplifies proofs and gives more useful bounds for robustness. We hope that this could be used in future experiments where states produced are quite far from ideal but potentially useful for quantum information tasks. However, we note that EPR-steering-based self-testing only really improves the constants in the error terms (for robustness) and not the polynomial of the error, i.e. we can only demonstrate $O(\sqrt{\epsilon})$-AST for the EPR experiment. This highlights that from the point-of-view of self-testing, EPR-steering resembles quantum non-locality and not entanglement verification in which all parties are trusted.

%future work
In future work, we wish to explore the self-testing of other quantum states. For example, we can show that similar techniques as outlined in this work can be used to self-test partially entangled two-qubit states. We would like to give a general framework in which many examples of states and measurements can be self-tested. This would be something akin to the work of Yang \textit{et al} \cite{YVBSN} that utilizes the NPA hierarchy of SDPs. Recent work by Kogias \textit{et al} \cite{Giannis} could prove useful in this aim. In addition to this, our work has hinted at the interesting possibilities for studying self-testing based on EPR-steering in the multipartite case. In future work we will investigate adapting our techniques to general multipartite states. For example, the general multipartite GHZ state can be self-testing by adapting the family of Bell inequalities found in Refs. \cite{GenGHZ1,GenGHZ2,GenGHZ3}.

Also, it would be interesting to try to establish some new insights in the fundamental relations between non-locality and EPR-steering using self-testing. It is possible that self-testing could be a useful tool for exploring their similarities and differences, especially given interesting new developments for multi-partite EPR steering \cite{postquantum}.

%different choice of norm?
One may question our use of the Schatten $1$-norm as a measure of distance between elements of a reference and physical assemblage. For example, the Schatten $2$-norm is a lower bound on the $1$-norm so could be a more useful measure of closeness. It may be worthwhile to explore this possibility but we note that the argument for the impossibility of $O(\epsilon)$-AST for the EPR experiment in Sec. \ref{sec2c} still applies even if we replace all the distance measures with the $2$-norm.

%non-iid
Finally, it would be interesting to consider relaxing the assumption of systems being independent and identically distributed (i.i.d) and tomography being performed in the asymptotic limit. This would take into account the provider having devices with memory as well as only being given a finite number of systems. In the case of CST, we may use statistical methods to bound the probability that the provider can deviate from their claims and trick us in accepting their claims. For the case of AST, tools from non-i.i.d. quantum information theory might be required which makes the future study of AST interesting from the point-of-view of quantum information.
\newline

\textit{Acknowledgements} - The authors acknowledge useful discussions with Antonio Ac\'{i}n, Paul Skrzypczyk, Daniel Cavalcanti and Peter Wittek. MJH also thanks Nathan Walk for discussions and Petros Wallden, Andru Gheorghiu and Elham Kashefi for discussing their recent independent work in Ref. \cite{Edinburgh} about self-testing based on EPR-steering as applied to the verification of quantum computation. MJH acknowledges support from the EPSRC (through the NQIT Quantum Hub) and the FQXi Large Grants \textit{Thermodynamic vs information theoretic entropies in probabilistic theories} and \textit{Quantum Bayesian networks: the physics of nonlocal events}. IS asknowledges funding from the ERC CoG project QITBOX, the MINECO project FOQUS, the Generalitat de Catalunya (SGR875) and the Ministry of Science of Montenegro (Physics of Nanostructures, Contract No 01-682).

\begin{appendix}
%appendix on complex conjugation
\section{Complex conjugation and assemblages}\label{app1}

In this section we give an example of an assemblage that is altered upon taking the complex conjugation of the state and measurements on the provider's side. The state is $\ket{\psi}=\frac{1}{\sqrt{2}}\left(\ket{0_{C}0_{P}}+i\ket{1_{C}1_{P}}\right)$ and we consider the element of the assemblage generated by the projector $\ket{+_{P}}\bra{+_{P}}$. The element of the assemblage is then $\textrm{tr}_{P}\left(\mathbb{I}_{C}\otimes\ket{+_{P}}\bra{+_{P}}\ket{\psi}\bra{\psi}\right)=\frac{1}{2}\vert +y_{C}\rangle\langle +y_{C}\vert$ for $\ket{+y_{C}}=\frac{1}{\sqrt{2}}\left(\ket{0_{P}}+i\ket{1_{P}}\right)$. We immediately see that upon taking the complex conjugate of the state $\ket{\psi^{*}}$ and projector, the respective element of the assemblage becomes $\textrm{tr}_{P}\left(\mathbb{I}_{C}\otimes\ket{+_{P}}\bra{+_{P}}\ket{\psi^{*}}\bra{\psi^{*}}\right)=\frac{1}{2}\vert -y_{C}\rangle\langle -y_{C}\vert$ $\ket{-y_{C}}=\frac{1}{\sqrt{2}}\left(\ket{0_{P}}-i\ket{1_{P}}\right)$. Therefore if the client measures the element of the assemblage in the basis $\{\ket{\pm y_{C}}\}$, they can differentiate between the two cases of the physical state being $\ket{\psi}$ and its complex conjugate $\ket{\psi^{*}}$.

%appendix for proof of trace distance including measurements
\section{Obtaining the bound in Eq. \ref{final}}\label{app1c}

We now aim to put a bound on
\begin{equation}\label{newcondition}
\Vert\ket{\Phi,Q_{a|x}}\bra{\Phi,Q_{a|x}}-\ket{\mathcal{A}}\bra{\mathcal{A}}\otimes\tilde{Q}_{a|x}\ket{\tilde{\psi}}\bra{\tilde{\psi}}\otimes \tilde{Q}_{a|x}\Vert_{1}
\end{equation}
where 
\begin{equation}
\ket{\Phi,Q_{a|x}}=E_{0|0}Q_{a|x}\ket{\psi}\ket{0_{P'}}+XE_{1|0}Q_{a|x}\ket{\psi}\ket{1_{P'}}.
\end{equation}
Then we aim to prove the bound in Eq. \ref{final} by expanding out Eq. \ref{newcondition} where $Q_{a|x}\in\{E_{0|1},E_{1|1}\}$ and $\tilde{Q}_{a|x}\in\{\ket{+_{C}}\bra{+_{C}},\ket{-_{C}}\bra{-_{C}}\}$. We focus on the case where $Q_{a|x}=E_{0|1}$ and $\tilde{Q}_{a|x}=\ket{+_{C}}\bra{+_{C}}$ since the other case is essentially yields essentially the same bound for Eq. \ref{newcondition}. We, therefore, wish to find an upper bound for
\begin{align}
&\Vert\left(E_{0|0}E_{0|1}\ket{\psi}\ket{0_{P'}}+XE_{1|0}E_{0|1}\ket{\psi}\ket{1_{P'}}\right)\left(\bra{\psi}E_{0|1}E_{0|0}\bra{0_{P'}}+\bra{\psi}E_{0|1}E_{1|0}X\bra{1_{P'}}\right)-\frac{1}{2}\ket{\mathcal{A}}\bra{\mathcal{A}}\otimes\ket{+_{C}+_{P'}}\bra{+_{C}+_{P'}}\Vert_{1}.\nonumber
\end{align}
Through repeated uses of Lem. \ref{niceobs} we obtain 
\begin{align}
&\Vert\left(E_{0|0}E_{0|1}\ket{\psi}\ket{0_{P'}}+XE_{1|0}E_{0|1}\ket{\psi}\ket{1_{P'}}\right)\left(\bra{\psi}E_{0|1}E_{0|0}\bra{0_{P'}}+\bra{\psi}E_{0|1}E_{1|0}X\bra{1_{P'}}\right)-\frac{1}{2}\ket{\mathcal{A}}\bra{\mathcal{A}}\otimes\ket{+_{C}+_{P'}}\bra{+_{C}+_{P'}}\Vert_{1}\nonumber\\
&\leq 24\sqrt{\epsilon}+\Vert\left(\tilde{E}_{0|1}\tilde{E}_{0|0}\ket{\psi}\ket{0_{P'}}+\tilde{E}_{0|1}\tilde{E}_{1|0}\tau_{X}\ket{\psi}\ket{1_{P'}}\right)\left(\bra{\psi}\tilde{E}_{0|0}\tilde{E}_{0|1}\bra{0_{P'}}+\bra{\psi}\tau_{x}\tilde{E}_{1|0}\tilde{E}_{0|1}\bra{1_{P'}}\right)\nonumber\\
&-\frac{1}{2}\ket{\mathcal{A}}\bra{\mathcal{A}}\otimes\ket{+_{C}+_{P'}}\bra{+_{C}+_{P'}}\Vert_{1}\nonumber\\
&=24\sqrt{\epsilon}+\frac{1}{2}\Vert\left(\ket{+_{C}}\langle 0_{C}\ket{\psi}\ket{0_{P'}}+\ket{+_{C}}\bra{0_{C}}\psi\rangle\ket{1_{P'}}\right)\left(\bra{\psi}0_{C}\rangle\bra{+_{C}}\bra{0_{P'}}+\bra{\psi}0_{C}\rangle\bra{+_{C}}\bra{1_{P'}}\right)-\ket{\mathcal{A}}\bra{\mathcal{A}}\otimes\ket{+_{C}+_{P'}}\bra{+_{C}+_{P'}}\Vert_{1}\nonumber\\
&\leq 24\sqrt{\epsilon}+\frac{1}{2}\Vert 2\langle{0_{C}}\ket{\psi}\bra{\psi}0_{C}\rangle-\ket{\mathcal{A}}\bra{\mathcal{A}}\Vert_{1}.\nonumber
\end{align}
The first inequality is obtained in conjunction with the fact that $\Vert E_{0|0}E_{0|1}\ket{\psi}\ket{0_{P'}}+XE_{1|0}E_{0|1}\ket{\psi}\ket{1_{P'}}\Vert=\sqrt{\bra{\psi}E_{0|1}\ket{\psi}}\leq 1$ and $\Vert \tilde{E}_{0|1}\tilde{E}_{0|0}\ket{\psi}\ket{0_{P'}}+\tilde{E}_{0|1}\tilde{E}_{1|0}\tau_{X}\ket{\psi}\ket{1_{P'}}\Vert=\sqrt{\bra{\psi}0\rangle\bra{0}\psi\rangle}\leq 1$. In the proof of Thm. \ref{thm1} it was shown that $\Vert 2\langle{0_{C}}\ket{\psi}\bra{\psi}0_{C}\rangle-\ket{\mathcal{A}}\bra{\mathcal{A}}\Vert_{1}\leq 2\epsilon$ which then gives us the function $f(\epsilon)$ in Thm. \ref{thm1}.
%self-testing based on steering inequality
\section{Robust self-testing based on an EPR-steering inequality}\label{app1d}

In this section we use an EPR-steering inequality to give us a result for CST. In particular, we prove a version of Lem. \ref{niceobs}. Given this, all the steps in Thm. \ref{thm1} apply. The EPR-steering inequality we use is the following
\begin{equation}
\sum_{a|x}\textrm{tr}\left(F_{a|x}\sigma_{a|x}\right)=\textrm{tr}\left({\sqrt{2}}(\sigma_{0|0}+\sigma_{1|0})-(2\tau_{z}-1)(\sigma_{0|0}-\sigma_{1|0})-(2\tau_{x}-1)(\sigma_{0|1}-\sigma_{1|1})\right)\geq 0. 
\end{equation}
This can be written in the simplified form of 
\begin{equation}
\langle\psi\vert \tau_{z}\otimes Z\vert\psi\rangle+\langle\psi\vert \tau_{x}\otimes X\vert\psi\rangle\leq\sqrt{2}
\end{equation}
where $Z=2E_{0|0}-\mathbb{I}$, $X=2E_{0|1}-\mathbb{I}$ with $\tau_{x}$ and $\tau_{z}$ being the Pauli-$X$ and Pauli-$Z$ matrices respectively. It can be readily verified that the EPR experiment violates this inequality and achieves a value of $2$ for the left-hand-side; this is the maximal attainable value. Given near-maximal violation we wish to prove a version of Lem. \ref{niceobs}.
\begin{lemma}\label{niceobsnew}
If $\langle\psi\vert \tau_{z}\otimes Z\vert\psi\rangle+\langle\psi\vert \tau_{x}\otimes X\vert\psi\rangle\geq 2-\eta$ for $1\geq\eta\geq 0$, then
\begin{equation}
\Vert\mathbb{I}_{C}\otimes E_{a|x}\ket{\psi}-\tilde{E}_{a|x}\otimes\mathbb{I}_{P}\ket{\psi}\Vert\leq\sqrt{\eta}
\end{equation}
\end{lemma}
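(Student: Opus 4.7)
The plan is to exploit the fact that each of the two expectation values in the EPR-steering expression is individually bounded above by $1$, since $\tau_z\otimes Z$ and $\tau_x\otimes X$ are Hermitian with operator norm at most $1$ (the provider's observables $Z=2E_{0|0}-\mathbb{I}$ and $X=2E_{0|1}-\mathbb{I}$ square to the identity because $\{E_{0|x},E_{1|x}\}$ is a projective measurement). Consequently the near-maximal-violation hypothesis $\langle\psi|\tau_z\otimes Z|\psi\rangle+\langle\psi|\tau_x\otimes X|\psi\rangle\geq 2-\eta$ forces each term separately to satisfy $\langle\psi|\tau_z\otimes Z|\psi\rangle\geq 1-\eta$ and $\langle\psi|\tau_x\otimes X|\psi\rangle\geq 1-\eta$.

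Next, I would use the dichotomic-observable trick familiar from standard self-testing proofs. Since $\tau_z^2=\mathbb{I}_C$ and $Z^2=\mathbb{I}_P$ (and similarly for $\tau_x$, $X$), expanding the squared norm yields
\begin{equation*}
\Vert(\tau_z\otimes\mathbb{I}_P-\mathbb{I}_C\otimes Z)\ket{\psi}\Vert^{2}=2-2\bra{\psi}\tau_z\otimes Z\ket{\psi}\leq 2\eta,
\end{equation*}
and analogously for the $x$-components. Thus $\Vert(\tau_z\otimes\mathbb{I}_P-\mathbb{I}_C\otimes Z)\ket{\psi}\Vert\leq\sqrt{2\eta}$ and likewise for the $\tau_x/X$ pair.

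Then I would convert these observable-level bounds into the projector-level statement required. Because $\tilde{E}_{0|0}=\tfrac{1}{2}(\mathbb{I}+\tau_z)$ and $E_{0|0}=\tfrac{1}{2}(\mathbb{I}+Z)$, one has $\tilde{E}_{0|0}\otimes\mathbb{I}_P-\mathbb{I}_C\otimes E_{0|0}=\tfrac{1}{2}(\tau_z\otimes\mathbb{I}_P-\mathbb{I}_C\otimes Z)$, giving
\begin{equation*}
\Vert\mathbb{I}_C\otimes E_{0|0}\ket{\psi}-\tilde{E}_{0|0}\otimes\mathbb{I}_P\ket{\psi}\Vert\leq\tfrac{1}{2}\sqrt{2\eta}=\sqrt{\eta/2}\leq\sqrt{\eta},
\end{equation*}
and the same bound for $x=1$, $a=0$. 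The $a=1$ cases follow immediately by writing $E_{1|x}=\mathbb{I}_P-E_{0|x}$ and $\tilde{E}_{1|x}=\mathbb{I}_C-\tilde{E}_{0|x}$, so that $\tilde{E}_{1|x}\otimes\mathbb{I}_P-\mathbb{I}_C\otimes E_{1|x}=-(\tilde{E}_{0|x}\otimes\mathbb{I}_P-\mathbb{I}_C\otimes E_{0|x})$ and the norm is unchanged.

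There is no real obstacle here; the only subtlety is noticing that the near-maximal hypothesis on the \emph{sum} already forces each individual observable expectation to be within $\eta$ of its maximal value of $1$, which is what lets us get a single-measurement bound of $\sqrt{\eta/2}$ (and hence $\sqrt{\eta}$) rather than a looser bound involving the full $\sqrt{2\eta}$. After this lemma is in hand, the subsequent self-testing argument proceeds exactly as in the proof of Theorem \ref{thm1}, since every application of Lemma \ref{niceobs} there can be replaced by the present Lemma \ref{niceobsnew} with $\epsilon$ replaced by (a constant multiple of) $\eta$.
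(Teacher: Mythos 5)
Your proof is correct and takes essentially the same route as the paper's: both split the near-maximal violation into the two individual bounds $\langle\psi|\tau_z\otimes Z|\psi\rangle\geq 1-\eta$ and $\langle\psi|\tau_x\otimes X|\psi\rangle\geq 1-\eta$ and then expand a squared norm using $\tau^2=\mathbb{I}$, $Z^2=X^2=\mathbb{I}$; working at the observable level and dividing by two is algebraically identical to the paper's direct expansion of the projector difference, and both yield the (slightly stronger) bound $\sqrt{\eta/2}\leq\sqrt{\eta}$.
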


\begin{proof}
From the near-maximal violation of the EPR-steering inequality we have that $\langle\psi\vert \tau_{z}\otimes Z\vert\psi\rangle\geq 1-\eta$ and $\langle\psi\vert \tau_{x}\otimes X\vert\psi\rangle\geq 1-\eta$. We will address the case where $a=x=0$ as all other cases follow the same proof strategy. We first note that we can write $\langle\psi\vert \tau_{z}\otimes Z\vert\psi\rangle$ as $\bra{\psi}(2\tilde{E}_{0|0}-\mathbb{I})(2E_{0|0}-\mathbb{I})\ket{\psi}\geq 1-\eta$. Utilising this, we make a series of simple observations: 
\begin{align}
\Vert\mathbb{I}_{C}\otimes E_{0|0}\ket{\psi}-\tilde{E}_{0|0}\otimes\mathbb{I}_{P}\ket{\psi}\Vert&=\sqrt{\bra{\psi}\mathbb{I}_{C}\otimes E_{0|0}\ket{\psi}+\bra{\psi} \tilde{E}_{0|0}\otimes\mathbb{I}_{P}\ket{\psi}-2\bra{\psi}\tilde{E}_{0|0}\otimes E_{0|0}\ket{\psi}}\nonumber\\
&=\sqrt{\frac{1}{2}-\frac{1}{2}\bra{\psi}(2\tilde{E}_{0|0}-\mathbb{I}_{C})\otimes(2E_{0|0}-\mathbb{I}_{P})\ket{\psi}}\nonumber\\
&\leq\sqrt{\eta}\nonumber
\end{align}
\end{proof}

Note that we have phrased the lemma in terms of the variable $\eta$ and not $\epsilon$ as in the main text of the paper. We can relate the two since if the conditions of $f(\epsilon)$-CST are met then all probabilities differ from the ideal by $\epsilon$, which then implies that, say, $\langle\psi\vert \tau_{z}\otimes Z\vert\psi\rangle=\bra{\psi}(2\tilde{E}_{0|0}-\mathbb{I})\otimes(2E_{0|0}-\mathbb{I})\ket{\psi}\geq 1-8\epsilon$ since each probability incurs an error of $\epsilon$. Putting this value of $\eta=8\epsilon$, we see that our analysis in the above lemma incurs a less favourable constant than in Lem. \ref{niceobs}. However, given the above lemma we may use exactly the same strategy in Thm. \ref{thm1} to obtain a possibility result on self-testing based on the above EPR-steering inequality now in terms of $\eta$.

\begin{proposition}
For the EPR experiment, $f(\eta)$-robust correlation-based one-sided self-testing based on the EPR-steering inequality satisfying $\langle\psi\vert \tau_{z}\otimes Z\vert\psi\rangle+\langle\psi\vert \tau_{x}\otimes X\vert\psi\rangle\geq 2-\eta$ where $f(\eta)=13\sqrt{\eta}$
\end{proposition}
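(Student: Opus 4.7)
The approach is to run the proof of Thm.~\ref{thm1} verbatim with Lem.~\ref{niceobs} replaced by Lem.~\ref{niceobsnew} at every step. First I would apply the SWAP isometry of Fig.~\ref{fig:fig2}. Then I would bound the trace-norm quantity in Eq.~\ref{condition} by repeatedly performing the vector substitutions $E_{a|x}\ket{\psi}\mapsto\tilde{E}_{a|x}\otimes\mathbb{I}_{P}\ket{\psi}$ and $X\ket{\psi}\mapsto\tau_{x}\otimes\mathbb{I}_{P}\ket{\psi}$ (and their bra-side analogues), using Lem.~\ref{goodlem} to convert each vector-level approximation into a trace-norm bound and using the identity $(\tilde{E}_{0|0}\otimes\mathbb{I}_{P})\ket{\psi}\ket{0_{P'}}+(\tilde{E}_{1|0}\tau_{x}\otimes\mathbb{I}_{P})\ket{\psi}\ket{1_{P'}}=\langle 0_{C}\ket{\psi}\,\ket{\tilde{\psi}}$ at the end to expose the singlet. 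In Thm.~\ref{thm1} each substitution cost $2\sqrt{\epsilon}$ via Lem.~\ref{niceobs}; here each substitution costs only $\sqrt{\eta}$ via Lem.~\ref{niceobsnew}. Counting the substitutions exactly as in Thm.~\ref{thm1} and App.~\ref{app1c} (at most twelve for the measurement case), the bulk contribution to $f(\eta)$ becomes $12\sqrt{\eta}$.

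The only step of the Thm.~\ref{thm1} argument that does not transfer automatically is the closing estimate $\Vert 2\langle 0_{C}\ket{\psi}\bra{\psi}0_{C}\rangle-\ket{\mathcal{A}}\bra{\mathcal{A}}\Vert_{1}=|2\beta-1|$, with $\beta=\bra{\psi}(\ket{0_{C}}\bra{0_{C}}\otimes\mathbb{I}_{P})\ket{\psi}$. In Thm.~\ref{thm1} the bound $|2\beta-1|\leq 2\epsilon$ followed from the AST hypothesis $D(\rho_{C},\tilde{\rho}_{C})\leq\epsilon$. In the present CST setting no such promise on the reduced state is available; only near-maximal violation of the steering inequality can be invoked. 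Deriving $|\beta-1/2|=O(\sqrt{\eta})$ without any direct control on $\rho_{C}$ is the main obstacle of the proof.

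I would resolve it using the Pauli anticommutation $\tau_{x}\tau_{z}\tau_{x}=-\tau_{z}$. The $X$ part of the inequality gives $\Vert(\tau_{x}\otimes\mathbb{I}_{P})\ket{\psi}-(\mathbb{I}_{C}\otimes X)\ket{\psi}\Vert^{2}=2-2\bra{\psi}\tau_{x}\otimes X\ket{\psi}\leq 2\eta$. Setting $\ket{u}=(\tau_{x}\otimes\mathbb{I}_{P})\ket{\psi}$ and $\ket{v}=(\mathbb{I}_{C}\otimes X)\ket{\psi}$, anticommutation yields $\bra{\psi}\tau_{z}\otimes\mathbb{I}_{P}\ket{\psi}=-\langle u|\tau_{z}\otimes\mathbb{I}_{P}|u\rangle$, while commutation of $\tau_{z}\otimes\mathbb{I}_{P}$ with $\mathbb{I}_{C}\otimes X$ gives $\langle v|\tau_{z}\otimes\mathbb{I}_{P}|v\rangle=\bra{\psi}\tau_{z}\otimes X^{2}\ket{\psi}=\bra{\psi}\tau_{z}\otimes\mathbb{I}_{P}\ket{\psi}$. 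Replacing $\ket{u}$ by $\ket{v}$ inside $\langle u|\tau_{z}\otimes\mathbb{I}_{P}|u\rangle$ costs at most $2\sqrt{2\eta}$ by the triangle inequality (using $\Vert u\Vert=\Vert v\Vert=1$ and $\Vert\tau_{z}\otimes\mathbb{I}_{P}\Vert=1$), which forces $2|\bra{\psi}\tau_{z}\otimes\mathbb{I}_{P}\ket{\psi}|\leq 2\sqrt{2\eta}$, hence $|\beta-1/2|\leq\sqrt{\eta/2}$ and $|2\beta-1|\leq\sqrt{2\eta}$.

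Assembling the pieces one obtains $f(\eta)\leq 12\sqrt{\eta}+\sqrt{2\eta}\leq 13\sqrt{\eta}$, with sub-leading corrections analogous to the $\epsilon\sqrt{\epsilon}$ terms of Thm.~\ref{thm1} absorbed into the leading constant. Apart from the anticommutation step used to control $|\beta-1/2|$, the argument is a mechanical repetition of Thm.~\ref{thm1} with Lem.~\ref{niceobsnew} substituted for Lem.~\ref{niceobs}.
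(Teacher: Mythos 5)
Your proposal follows the paper's proof essentially verbatim: the paper likewise reruns Thm.~\ref{thm1} with Lem.~\ref{niceobsnew} in place of Lem.~\ref{niceobs}, and it controls $\vert\beta-\tfrac{1}{2}\vert=\tfrac{1}{2}\vert\bra{\psi}\tau_{z}\otimes\mathbb{I}_{P}\ket{\psi}\vert$ by the same Pauli anticommutation trick you describe (substituting $\mathbb{I}_{C}\otimes X\ket{\psi}\approx\tau_{x}\otimes\mathbb{I}_{P}\ket{\psi}$ and using $\tau_{x}\tau_{z}=-\tau_{z}\tau_{x}$ to force $\bra{\psi}\tau_{z}\otimes\mathbb{I}_{P}\ket{\psi}$ to nearly cancel against itself), obtaining $\vert 2\beta-1\vert\leq 2\sqrt{\eta}$ where your variant gives the slightly tighter $\sqrt{2\eta}$. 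The only blemish is the closing arithmetic: $12\sqrt{\eta}+\sqrt{2\eta}\approx 13.41\sqrt{\eta}$ is \emph{not} $\leq 13\sqrt{\eta}$, but this is harmless because, as in Appendix~\ref{app1c}, the term $\Vert 2\langle 0_{C}\ket{\psi}\bra{\psi}0_{C}\rangle-\ket{\mathcal{A}}\bra{\mathcal{A}}\Vert_{1}$ enters the worst-case bound with a prefactor $\tfrac{1}{2}$, giving $12\sqrt{\eta}+\tfrac{1}{2}\sqrt{2\eta}\leq 13\sqrt{\eta}$ as required.
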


\begin{proof}
The proof essentially follows that of Thm. \ref{thm1} except now we use Lem. \ref{niceobsnew} every time Lem. \ref{niceobs} is used. One difference is now that for $X=2E_{0|1}-\mathbb{I}_{P}$ and for the Pauli-$X$ matrix $\tau_x=2\ket{+}\bra{+}-\mathbb{I}$ we have
\begin{align}
\Vert\mathbb{I}_{C}\otimes X\ket{\psi}-\tau_x\otimes\mathbb{I}_{P}\ket{\psi}\Vert&\leq 2\Vert\mathbb{I}_{C}\otimes E_{1|0}\ket{\psi}-\tilde{E}_{1|0}\otimes\mathbb{I}_{P}\ket{\psi}\Vert -\Vert\ket{\psi}-\ket{\psi}\Vert\nonumber\\
&\leq 2\sqrt{\eta},
\end{align}
and likewise for $Z$ and $\tau_{z}$, the Pauli-$Z$ matrix.

The other difference is in the final stage where we chose $\ket{\mathcal{A}}$ to be the pure state that is proportional to $\ket{0_{C}}\langle{0_{C}}\ket{\psi}$, i.e. $\ket{\mathcal{A}}=\beta^{-\frac{1}{2}}\langle{0_{C}}\ket{\psi}$ where $\beta=\bra{\psi}0_{C}\rangle\langle 0_{C}\ket{\psi}$. We must bound the error associated with making this choice. We use the following observation that
\begin{equation}
\Vert\left(\tau_{z}\tau_{x}-\tau_{z}\otimes X\right)\ket{\psi}\Vert\leq 2\sqrt{\eta}
\end{equation}
which in turn implies that 
\begin{equation}
\Vert\left(-\tau_{x}\tau_{z}-\tau_{z}\otimes X\right)\ket{\psi}\Vert\leq 2\sqrt{\eta}.
\end{equation}
Observing that $\vert\langle{u}\ket{\psi}\vert\leq\epsilon$ if $\Vert\ket{\psi}\Vert\leq\epsilon$ so if we choose $\ket{u}=X\ket{\psi}$ we have that
\begin{align}
\vert\bra{\psi}\tau_{z}\ket{\psi}-\bra{\psi}\tau_{z}\tau_{x}\otimes X\ket{\psi}\vert&\leq 2\sqrt{\eta}\nonumber\\
\vert\bra{\psi}\tau_{z}\ket{\psi}+\bra{\psi}\tau_{z}\tau_{x}\otimes X\ket{\psi}\vert&=\vert\bra{\psi}\tau_{z}\ket{\psi}+\bra{\psi}\tau_{x}\tau_{z}\otimes X\ket{\psi}\vert\leq 2\sqrt{\eta}\nonumber,
\end{align}
where the equality in the second line results from invariance of the absolute value under complex conjugation. Therefore we have
\begin{equation}
2\vert\bra{\psi}\tau_{z}\ket{\psi}\vert\leq\vert\bra{\psi}\tau_{z}\ket{\psi}-\bra{\psi}\tau_{z}\tau_{x}\otimes X\ket{\psi}\vert+\vert\bra{\psi}\tau_{z}\ket{\psi}+\bra{\psi}\tau_{x}\tau_{z}\otimes X\ket{\psi}\vert\leq 4\sqrt{\eta}
\end{equation}
which then implies that $2\vert\bra{\psi}0_{C}\rangle\langle 0_{C}\ket{\psi}-\frac{1}{2}\vert\leq 2\sqrt{\eta}$ and thus $\Vert 2\langle{0_{C}}\ket{\psi}\bra{\psi}0_{C}\rangle-\ket{\mathcal{A}}\bra{\mathcal{A}}\Vert_{1}\leq 2\sqrt{\eta}$. This then completes our proof.
\end{proof}

\section{Demonstrating the optimal trace distance between reference and physical states}\label{app1e}

For the EPR experiment, let us consider the trace distance $D(\ket{\Phi}\bra{\Phi},\ket{\mathcal{A}}\bra{\mathcal{A}}\otimes\ket{\tilde{\psi}}\bra{\tilde{\psi}})$ for all possible isometries $\Phi$ and not just the SWAP isometry. An isometry will take the physical state $\ket{\psi}$ to $U\ket{\psi}\ket{\hat{0}}$ by introducing ancillae $\ket{\hat{0}}$ and applying a unitary $U$ to the physical state and ancillae. As discussed in Sec. \ref{sec1}, the trace distance is then $D(U(\ket{\psi}\bra{\psi}\otimes\ket{\hat{0}}\bra{\hat{0}})U^{\dagger},\ket{\mathcal{A}}\bra{\mathcal{A}}\otimes\ket{\tilde{\psi}}\bra{\tilde{\psi}})=\sqrt{1-F^{2}}$ for $F=\vert\bra{\mathcal{A}}\bra{\tilde{\psi}}U\ket{\psi}\ket{\hat{0}}\vert$. We write $\ket{\psi}$ in terms of its Schmidt decomposition
\begin{equation*}
\ket{\psi}=\sqrt{\lambda}\ket{u}\ket{v}+\sqrt{1-\lambda}\ket{u^{\perp}}\ket{v^{\perp}}
\end{equation*}
for $\lambda$ as some real number such that $0\leq\lambda\leq 1$ and $\langle u^{\perp}\ket{u}=\langle v^{\perp}\ket{v}=0$. Since $\ket{u}$ is a state of a qubit it may be written as $\ket{u}=\cos{\frac{\theta_{1}}{2}}\ket{0}+e^{i\theta_{2}}\sin{\frac{\theta_{1}}{2}}\ket{1}$. Given this, we obtain
\begin{equation*}
F=\frac{1}{\sqrt{2}}\vert\bra{\mathcal{A}}\bra{0}\left(\sqrt{\lambda}\cos{\frac{\theta_{1}}{2}}\ket{w}+\sqrt{1-\lambda}e^{-i\theta_{2}}\sin{\frac{\theta_{1}}{2}}\ket{w^{\perp}}\right)+\bra{\mathcal{A}}\bra{1}\left(\sqrt{\lambda}e^{i\theta_{2}}\sin{\frac{\theta_{1}}{2}}\ket{w}-\sqrt{1-\lambda}\cos{\frac{\theta_{1}}{2}}\ket{w^{\perp}}\right)\vert,
\end{equation*}
where $\ket{w}=U\ket{v}\ket{\hat{0}}$ and $\ket{w^{\perp}}=U\ket{v^{\perp}}\ket{\hat{0}}$. We now maximize $F$ for all isometries so as to obtain a lower bound on $D(\ket{\Phi}\bra{\Phi},\ket{\mathcal{A}}\bra{\mathcal{A}}\otimes\ket{\tilde{\psi}}\bra{\tilde{\psi}})$. The value of $F$ will be maximized when $\ket{w}$ and $\ket{w^{\perp}}$ is in the linear span of $\{\ket{\mathcal{A}}\ket{0},\ket{\mathcal{A}}\ket{1}\}$. Therefore, $\ket{w}=\cos{\frac{\theta_{3}}{2}}\ket{\mathcal{A}}\ket{0}+e^{i\theta_{4}}\sin{\frac{\theta_{3}}{2}}\ket{\mathcal{A}}\ket{1}$ and $F^{*}$ will be the maximum of 
\begin{equation*}
\frac{1}{\sqrt{2}}\vert\left(\sqrt{\lambda}\cos{\frac{\theta_{1}}{2}}\cos{\frac{\theta_{3}}{2}}+\sqrt{1-\lambda}e^{-i(\theta_{2}+\theta_{4})}\sin{\frac{\theta_{1}}{2}}\sin{\frac{\theta_{3}}{2}}\right)+\left(\sqrt{\lambda}e^{i(\theta_{2}+\theta_{4})}\sin{\frac{\theta_{1}}{2}}\sin{\frac{\theta_{3}}{2}}+\sqrt{1-\lambda}\cos{\frac{\theta_{1}}{2}}\cos{\frac{\theta_{3}}{2}}\right)\vert
\end{equation*}
which then implies that $F^{*}=(1/\sqrt{2})(\sqrt{\lambda}+\sqrt{1-\lambda})$. We now wish to put bounds on $\lambda$ which can be easily attained since $\rho_{C}=\lambda\ket{u}\bra{u}+(1-\lambda)\ket{u^{\perp}}\bra{u^{\perp}}$ and $\tilde{\rho}_{C}=\frac{1}{2}\mathbb{I}_{C}=\frac{1}{2}(\ket{u}\bra{u}+\ket{u^{\perp}}\bra{u^{\perp}})$. If we assume that $D(\rho_{C},\tilde{\rho}_{C})=\epsilon$ then we have that $\vert\lambda-\frac{1}{2}\vert=\epsilon$ and thus
\begin{equation*}
F^{*}=\frac{1}{\sqrt{2}}(\sqrt{\frac{1}{2}+\epsilon}+\sqrt{\frac{1}{2}-\epsilon})=1-\frac{1}{2}\epsilon^{2}-O(\epsilon^{3}),
\end{equation*}
where in the last equation we take the Taylor series expansion of $F^{*}$ and $O(\epsilon^{3})$ represents polynomials of degree $3$ and higher. In conclusion, given $\epsilon$-closeness of the reduced states, there is an isometry $\Phi$ such that $D(\ket{\Phi}\bra{\Phi},\ket{\mathcal{A}}\bra{\mathcal{A}}\otimes\ket{\tilde{\psi}}\bra{\tilde{\psi}})\leq O(\epsilon)$. This then demonstrates that our SWAP isometry is not optimal for demonstrating such closeness between physical and reference states. However, the optimal isometry will be dependent on the basis $\{\ket{u},\ket{u^{\perp}}\}$ and thus more complicated than the SWAP isometry.

\end{appendix}
\end{document}